\newtheorem{theorem}{Theorem}
\newtheorem{proposition}[theorem]{Proposition}
\newtheorem{lemma}[theorem]{Lemma}
\newtheorem{observation}[theorem]{Observation}
\newtheorem{corollary}[theorem]{Corollary}
\newtheorem{conjecture}[theorem]{Conjecture}
\newcommand{\geod}[1]{g\left(#1\right)}
\newcommand{\dem}[1]{\mathrm{dem}\left(#1\right)}
\newcommand{\meg}[1]{\mathrm{meg}\left(#1\right)}
\newcommand{\cyclo}[1] {c\left(#1\right)}
\newcommand{\leaf}[1]{\ell\left(#1\right)}
\newcommand{\ipec}[1]{\mathrm{isopec}\left(#1\right)}
\newcommand{\ipp}[1]{\mathrm{isopp}\left(#1\right)}
\newcommand{\mdim}{\mathrm{mdim}}  % mixed metric dimension
\newcommand{\edim}{\mathrm{edim}}  % edge metric dimension
\newcommand{\dist}[2]{d\left(#1,#2\right)}
\newcommand{\legs}[1] {l\left(#1\right)}
\newcommand{\branchres}[1] {L\left(#1\right)}
\newcommand{\ff}[1]{{\color{black} #1}}
\newcommand{\ah}[1]{{\color{black} #1}}
\begin{document}

%\title{Distance-based covering problems for graphs of given cyclomatic number}
\title{Distance-based (and path-based) covering problems for graphs of given cyclomatic number\footnote{The second and third authors were supported by the IDEX-ISITE initiative CAP 20-25 (ANR-16-IDEX-0001), the International Research Center ``Innovation Transportation and Production Systems'' of the I-SITE CAP 20-25, and the ANR project GRALMECO (ANR-21-CE48-0004). The third author was supported by the Jenny and Antti Wihuri Foundation, the Research Council of Finland grant number 338797.}}

%\titlerunning{Distance-based graph covering problems}

\author{Dibyayan Chakraborty\footnote{\noindent School of Computer Science, University of Leeds, United Kingdom.}
\and Florent Foucaud\footnote{\noindent Université Clermont Auvergne, CNRS, Clermont Auvergne INP, Mines Saint-\'Etienne, LIMOS, 63000 Clermont-Ferrand, France.}
\and Anni Hakanen\footnote{\noindent Turku Collegium for Science, Medicine and Technology, University of Turku, Finland.}~\footnote{\noindent Department of Mathematics and Statistics, University of Turku, FI-20014, Finland.}~\footnotemark[3]}
%\author{Dibyayan Chakraborty\inst{1}\orcidID{0000-0003-0534-6417}
%\and Florent Foucaud\inst{2}\orcidID{0000-0001-8198-693X}
%\and Anni Hakanen\inst{3}\orcidID{0000-0001-7473-2456}}
%
%\authorrunning{Chakraborty et al.}
% First names are abbreviated in the running head.
% If there are more than two authors, 'et al.' is used.
%
%\institute{Univ Lyon, CNRS, ENS de Lyon, Universit\'e Claude Bernard Lyon 1, LIP UMR5668, France \and Université Clermont Auvergne, CNRS, Clermont Auvergne INP, Mines Saint-\'Etienne, LIMOS, 63000 Clermont-Ferrand, France. \and Department of Mathematics and Statistics, University of Turku, FI-20014, Finland}

%
\maketitle

\begin{abstract}

We study a large family of graph covering problems, whose definitions rely on distances, for graphs of bounded cyclomatic number (that is, the minimum number of edges that need to be removed from the graph to destroy all cycles). These problems include (but are not restricted to) three families of problems: (i) variants of metric dimension, where one wants to choose a small  set $S$ of vertices of the graph such that every vertex is uniquely determined by its ordered vector of distances to the vertices of $S$; (ii) variants of geodetic sets, where one wants to select a small set $S$ of vertices such that any vertex lies on some shortest path between two vertices of $S$; (iii) variants of path covers, where one wants to select a small set of paths such that every vertex or edge belongs to one of the paths. We generalize and/or improve previous results in the area which show that the optimal values for these problems can be upper-bounded by a linear function of the cyclomatic number and the degree~1-vertices of the graph. To this end, we develop and enhance a technique recently introduced in [C. Lu, Q. Ye, C. Zhu. Algorithmic aspect on the minimum (weighted) doubly resolving set problem of graphs, \emph{Journal of Combinatorial Optimization} 44:2029--2039, 2022] and give near-optimal bounds in several cases. This solves (in some cases fully, in some cases partially) some conjectures and open questions from the literature. The method, based on breadth-first search, is of algorithmic nature and thus, all the constructions can be computed in linear time. {Our results also imply an algorithmic consequence for the computation of the \emph{optimal} solutions: for some of the problems, they can be computed in polynomial time for graphs of bounded cyclomatic number.}
\end{abstract}

\section{Introduction}

Distance-based covering problems in graphs are a central class of problems in graphs, both from a structural and from an algorithmic point of view, with numerous applications. Our aim is to study such problems for graphs of bounded cyclomatic number. The latter counts the number of edges that need to be removed from a graph to make it acyclic; it is a measure of sparsity of the graph that is popular in both structural and algorithmic graph theory~\cite{CV85,KN12,whitney1931non} that has classic applications in program testing~\cite{cyclomatic_complexity}.

Although distance-based covering problems are very diverse, they share some properties that make them behave similarly in certain contexts, a fact that has already been observed for some of these problems, in particular, the metric dimension and the geodetic number problems~\cite{BDM24,FGKLMST2024b,FGKLMST2024a}. Notably, regarding algorithmic applications, contrary to more ``locally defined'' problems, they often do not behave well for graphs of bounded treewidth~\cite{KLP22,KeKo20,LP22}. The goal of this paper is to demonstrate that, for the more restrictive graphs of bounded cyclomatic number, interesting bounds can be derived, using a similar technique that is both simple and powerful. The obtained bounds also lead to efficient algorithms for these graphs. We will mainly focus on three types of such problems, as follows.

\medskip

\paragraph{Metric dimension and its variants.} \sloppy In these concepts, introduced in the 1970s~\cite{S:leavesTree,Harary76}, the aim is to distinguish elements in a graph by using distances. A set $S \subseteq V(G)$ is a \emph{resolving set} of $G$ if for all distinct vertices $x,y \in V(G)$ there exists $s \in S$ such that $\dist{s}{x} \neq \dist{s}{y}$. The smallest possible size of a resolving set of $G$ is called the \emph{metric dimension} of $G$ {(denoted by $\dim (G)$)}. %Resolving sets and metric dimension were first introduced in~\cite{S:leavesTree} and independently in~\cite{Harary76}.
During the last two decades, many variants of resolving sets and metric dimension have been introduced. In addition to the original metric dimension, we consider the edge and mixed metric dimensions of graphs. A set $S \subseteq V(G)$ is an \emph{edge resolving set} of $G$ if for all distinct edges $x,y \in E(G)$ there exists $s \in S$ such that $\dist{s}{x} \neq \dist{s}{y}$, where the distance from a vertex $v$ to an edge $e = e_1e_2$ is defined as $\min \{\dist{v}{e_1},\dist{v}{e_2}\}$~\cite{KelencEdge18}. A mixed resolving set is both a resolving set and an edge resolving set, but it must also distinguish vertices from edges and vice versa; a set $S \subseteq V(G)$ is a \emph{mixed resolving set} of $G$ if for all distinct $x,y \in V(G) \cup E(G)$ there exists $s \in S$ such that $\dist{s}{x} \neq \dist{s}{y}$~\cite{KelencMixed}. The \emph{edge metric dimension} {$\edim (G)$} (resp. \emph{mixed metric dimension} {$\mdim (G)$}) is the smallest size of an edge resolving set (resp. mixed resolving set) of $G$. More on the different variants of metric dimension and their applications (such as detection problems in networks, graph isomorphism, coin-weighing problems or machine learning) can be found in the recent surveys~\cite{KuziakYeroSurvey,TillquistSurvey}.

\medskip

\paragraph{Geodetic numbers.} A \emph{geodetic set} of a graph $G$ is a set $S$ of vertices such that any vertex of $G$ lies on some shortest path between two vertices of $S$~\cite{GS}. The \emph{geodetic number} $\geod{G}$ of $G$ is the smallest possible size of a geodetic set of $G$.

The version where the edges must be covered is called an \emph{edge-geodetic set}~\cite{EGS}. ``Strong'' versions of these notions have been studied. A \emph{strong (edge-) geodetic set} of graph $G$ is a set $S$ of vertices of $G$ such that we can assign for any pair $x,y$ of vertices of $S$ a shortest $xy$-path such that each vertex (edge) of $G$ lies on one of the chosen paths~\cite{SGS,preprintBresil,MKXAT17}. 

Recently, the concept of \emph{monitoring edge-geodetic set} was introduced in~\cite{MEG1} (see also~\cite{MEGbilo2024inapproximability,dev2023monitoring,MEG-CALDAM2024,MEG2}) as a strengthening of a strong edge-geodetic set: here, for every edge $e$, there must exist \ah{two vertices $x,y$ in the monitoring edge-geodetic set} such that $e$ lies on \emph{all} shortest paths between $x$ and $y$. 

These concepts have numerous applications related to the field of convexity in graphs, see the book~\cite{bookGC}.

We also consider the concept of \emph{distance-edge-monitoring-sets} introduced in~\cite{DEM1,DEM1conf} and further studied in~\cite{JKLMZ24,YKMD24,YYHMK24}, which can be seen as a relaxation of monitoring edge-geodetic sets. A set $S$ is a distance-edge-monitoring-set if, for every edge $e$ of $G$, there is a vertex $x$ of $S$ and a vertex $y$ of $G$ such that $e$ lies on all shortest paths between $x$ and $y$.% The minimum size of such a set is denoted $\dem{G}$.}

\medskip

\paragraph{Path covering and partition problems.} In this type of problem, one wishes to cover the vertices (or edges) of a graph using a small number of paths. For path partition problems, the paths are required to be vertex-disjoint, but for path covering, they may not be. A \emph{path cover} (\emph{path partition}, respectively) is a set of (vertex-disjoint) paths of a graph $G$ such that every vertex of $G$ belongs to one of the paths. If one path suffices, the graph is said to be Hamiltonian, and deciding this property is one of the most fundamental graph-algorithmic problems. The paths may be required to be shortest paths, in which case we have the notion of an \emph{isometric path cover/partition}~\cite{chakraborty2023isometric,FloManuscrit,DFPT24,PPP,IPC,TG21}; if they are required to be chordless, we have an \emph{induced path cover/partition}~\cite{PPP,LLM03,manuel2018revisiting}. The edge-covering versions have also been studied~\cite{ANDREATTA95}. This type of problems has numerous applications, such as program and circuit testing~\cite{ANDREATTA95,NH79}, vehicle routing~\cite{bus} or bioinformatics~\cite{dagPC}. They are the subject of well-known studies in graph theory, such as the Gallai-Milgram theorem~\cite{GM60} or conjectures by Berge~\cite{berge1983path} and Gallai~\cite{Lovasz1968}.

%\paragraph{Other related problems.}

% \paragraph{The cyclomatic number of a graph.}

\medskip 
\paragraph{Our goal.} Our objective is to study the three above classes of problems, on graphs of bounded cyclomatic number. (See Figure~\ref{fig:diagram} for a diagram showing the relationships between the optimal solution sizes of the studied problems.) A \emph{feedback edge set} of a graph $G$ is a set of edges whose removal turns $G$ into a forest. The smallest size of such a set, denoted by $\cyclo{G}$, is the \emph{cyclomatic number} of $G$~\cite{berge1973graphs}. It is sometimes called the \emph{feedback edge (set) number} or the \emph{cycle rank} of $G$. For a graph $G$ on $n$ vertices, $m$ edges and $k$ connected components, it is not difficult to see that we have $\cyclo{G}=m-n+k$, since a forest on $n$ vertices with $k$ components has $n-k$ edges. In this paper, we assume all our graphs to be connected. To find an optimal feedback edge set of a connected graph, it suffices to consider a spanning tree; the edges not belonging to the spanning tree form a minimum-size feedback edge set. 

Graphs whose cyclomatic number is constant have a relatively simple structure. They are sparse (in the sense that they have a linear number of edges). They also have bounded treewidth (indeed the treewidth is at most the cyclomatic number plus one), a parameter that plays a central role in the area of graph algorithms, see for example Courcelle's celebrated theorem~\cite{COURCELLE}. Thus, they are studied extensively from the perspective of algorithms (for example for the metric dimension~\cite{EpsteinWeighted}, the geodetic number~\cite{KeKo20} or other graph problems~\cite{CV85,DEHJLUU19,FKK01,KN12,UW13}). In particular, in many cases, distance-related problems are computationally hard on graphs of bounded treewidth~\cite{KLP19,KLP22,KeKo20,LP22}. Thus, for this type of problems, it is of interest to design efficient algorithms for graphs of bounded cyclomatic number. Graphs of given cyclomatic number are also studied from a more structural angle~\cite{alcon23,SedlarMixedED,SedlarCyclo2,SedlarCacti,whitney1931non}.

% \todo{F: add some important 
% references (book?) about cyclomatic number}

\paragraph{Conjectures addressed in this paper.} 
In order to formally present the conjectures, we need to introduce some structural concepts and notations. A \emph{leaf} of a graph $G$ is a vertex of degree~1, and the number of leaves of $G$ is denoted by $\leaf{G}$. Consider a vertex $v \in V(G)$ of degree at least~3. A \emph{leg} attached to the vertex $v$ is a path $p_1 \ldots p_k$ such that $p_1$ is adjacent to $v$, $\deg (p_k) = 1$ and $\deg (p_i) = 2$ for all $i \neq k$. The number of legs attached to the vertex $v$ is denoted by $\legs{v}$.

A set $R \subseteq V(G)$ is a \emph{branch-resolving set} of $G$, if for every vertex $v \in V(G)$ of degree at least 3 the set $R$ contains at least one element from at least $\legs{v} - 1$ legs attached to $v$. The minimum cardinality of a branch-resolving set of $G$ is denoted by $\branchres{G}$, and we have
\[ \branchres{G} = \sum_{v \in V(G), \, \deg(v) \geq 3, \, \legs{v} > 1} (\legs{v} - 1). \]

It is well-known that for any tree $T$ with at least one vertex of degree~3, we have 
$\dim(T)=\branchres{T}$ (and if $T$ is a path, then $\dim(T)=1$)~\cite{Chartrand00,Harary76,Khuller96,S:leavesTree}. This has motivated the following conjecture.

\begin{conjecture}[\cite{SedlarCacti}]\label{conj-dimedim}
	 Let $G$ be a connected graph with $\cyclo{G} \geq 2$. Then $\dim (G) \leq \branchres{G} + 2 \cyclo{G}$ and $ \edim (G) \leq \branchres{G} + 2 \cyclo{G}$.
% 	 , \todo{AH: This is not needed, because L(G) is defined right above this.}where $\branchres{G}$ is a parameter that depends on the structure of the leaves in $G$. 
\end{conjecture}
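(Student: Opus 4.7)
The plan is to build a resolving set of size at most $\branchres{G}+2\cyclo{G}$ by combining a branch-resolving set with the endpoints of a minimum feedback edge set. Fix a spanning tree $T$ of $G$ and let $e_1,\ldots,e_c$ with $c=\cyclo{G}$ be the non-tree edges; each $e_i=u_iv_i$ defines a fundamental cycle $C_i$ in $T+e_i$. Let $R$ be a branch-resolving set of $G$ of size $\branchres{G}$, obtained by placing one vertex on every leg except one at each branching vertex of $G$, and define
\[
S = R \cup \bigcup_{i=1}^{c} \{u_i,v_i\},
\]
so that $|S|\le \branchres{G}+2c$. The same $S$ will be argued to be an edge resolving set.

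To show $S$ resolves $G$, I would assume two distinct vertices $x\neq y$ satisfy $\dist{s}{x}=\dist{s}{y}$ for every $s\in S$ and derive a contradiction, in the spirit of the BFS-based technique of Lu, Ye and Zhu. Vertices lying in legs are already distinguished by $R$ via the classical tree argument, so $x$ and $y$ must lie in the $2$-edge-connected core of $G$. Any shortest $sx$- or $sy$-path that uses a non-tree edge $e_i$ can then be re-examined relative to the pair $\{u_i,v_i\}\subseteq S$: the key identity is that $\dist{u_i}{x}+\dist{x}{v_i}$ equals $|C_i|$ whenever $x\in C_i$ and some such shortest path uses $e_i$, and is strictly larger otherwise. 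Combining these equalities across all $c$ fundamental cycles with the triangle inequality should force a coordinate of the distance vector to differ between $x$ and $y$. The edge-resolving version $\edim(G)\le \branchres{G}+2c$ follows by lifting the vertex argument to edges via $\dist{v}{e}=\min\{\dist{v}{a},\dist{v}{b}\}$ for an edge $e=ab$.

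The hard part will be the interaction between multiple fundamental cycles that share tree paths: a shortest $x$-to-$u_i$ path may traverse other non-tree edges $e_j$ with $j\neq i$, so the local argument on each cycle does not immediately yield global distinguishability. I would overcome this by induction on $c$. Deleting a well-chosen non-tree edge $e_c$ (for instance one whose fundamental cycle is outermost in a chosen order on the cycles) reduces the cyclomatic number by one; applying the inductive hypothesis to $G-e_c$ yields a resolving set for $G-e_c$, and adding $u_c,v_c$ should restore distinguishability for the pairs that reinserting $e_c$ has merged. Making this induction go through is likely to require strengthening the inductive statement to a doubly-resolving-set-type property, which is exactly the point where the Lu--Ye--Zhu technique plays its decisive role.
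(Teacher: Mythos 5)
The statement you are trying to prove is a \emph{conjecture} (due to Sedlar and \v{S}krekovski), and the paper does not prove it in full: its results only get within an additive $+1$ of the bound in general (Theorem~\ref{thm:doubly} plus Corollaries~\ref{cor:md:deg2} and~\ref{cor:md:deg1}), and establish the exact conjectured bound only under extra hypotheses ($\branchres{G}\geq 1$, or $\delta(G)\geq 2$ with a cut-vertex). So a short argument of the kind you sketch would settle an open problem, and indeed it has genuine gaps. First, the choice of spanning tree is not innocuous: the whole technique of Lu--Ye--Zhu (and of this paper) requires a BFS tree rooted at a vertex $r$, i.e.\ a \emph{good} edge set, and the resulting resolving set must contain the root $r$ in addition to the $2\cyclo{G}$ endpoints --- that extra root is exactly where the $+1$ comes from, and removing it (to reach $\branchres{G}+2\cyclo{G}$ when $\branchres{G}=0$ and there is no cut-vertex) is precisely the part of the conjecture that remains open. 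Your set $S=R\cup\bigcup_i\{u_i,v_i\}$, built from an arbitrary spanning tree and omitting the root, is not known to resolve $G$, and your sketch does not supply the missing argument. Second, your ``key identity'' $\dist{u_i}{x}+\dist{x}{v_i}=|C_i|$ for $x\in C_i$ is false in general: fundamental cycles of an arbitrary spanning tree need not be isometric, and shortest paths from $x$ to $u_i$ or $v_i$ need not stay on $C_i$, so no coordinate comparison can be extracted from it.

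Two further steps would fail as written. The claim that vertices on legs are distinguished by $R$ ``via the classical tree argument'' ignores the cross cases (a leg vertex of one pendant tree versus a vertex of another pendant tree, or versus a core vertex); the paper handles these in Theorem~\ref{thm:mindeg1} only by exploiting the \emph{doubly} resolving property of the core set on the base graph, and even then obtains $\lambda(G)=\max\{\branchres{G},1\}$ rather than $\branchres{G}$. Finally, the proposed induction on $c$ does not go through: deleting a non-tree edge $e_c$ changes distances globally, so a resolving (or doubly resolving) set of $G-e_c$ can stop resolving pairs of $G$ whose distances shrink when $e_c$ is reinserted, and adding $u_c,v_c$ only repairs pairs whose separation is witnessed near $e_c$; you acknowledge this difficulty but do not overcome it, and strengthening the induction hypothesis to doubly resolving sets reproduces exactly the $2\cyclo{G}+1$ bound of Lu et al., not the conjectured $2\cyclo{G}$.
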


The restriction $\cyclo{G} \geq 2$ is missing from the original formulation of Conjecture~\ref{conj-dimedim} in~\cite{SedlarCacti}. However, Sedlar and \v{S}krekovski have communicated to us that this restriction should be included in the conjecture. Conjecture~\ref{conj-dimedim} holds for cacti with $\cyclo{G} \geq 2$~\cite{SedlarCacti}.
The bound $\dim(G) \leq \branchres{G} + 18\cyclo{G} - 18$ was shown in \cite{EpsteinWeighted} (for $\cyclo{G}\geq 2$), and is the first bound established for the metric dimension in terms of $\branchres{G}$ and $\cyclo{G}$ (note that the bound holds even for a vertex-weighted variant of metric dimension). The bound $\dim (G) \leq \branchres{G}+6c(G)$ was proved in~\cite{BousquetMdSparseZero}.

\ah{Sedlar and \v{S}krekovski \cite{SedlarCyclo2} also posed the following refinement of the previous conjecture, where $\delta (G)$ is the minimum degree of~$G$.}

\begin{conjecture}[\cite{SedlarCyclo2}]\label{conj-mindeg2}
	 If $\delta (G) \geq 2$ and $G \neq C_n$, then $\dim (G) \leq 2 \cyclo{G} - 1$ and $\edim (G) \leq 2 \cyclo{G} - 1$.
\end{conjecture}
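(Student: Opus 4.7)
The plan is to apply the BFS-based template emphasised in the introduction to the structural reduction of $G$ to its \emph{base multigraph} $G^*$, obtained by suppressing every internal degree-$2$ vertex. Since $\delta(G) \geq 2$ and $G \neq C_n$, every vertex of $G^*$ has degree at least $3$ and $c(G^*) = c(G) = c$. The handshake inequality $2m(G^*) \geq 3n(G^*)$ combined with $m(G^*) = n(G^*) + c - 1$ yields $n(G^*) \leq 2c-2$ and $m(G^*) \leq 3c-3$, so $G$ is a subdivision of a multigraph with at most $3c-3$ \emph{branch-paths}, i.e., paths whose internal vertices have degree $2$ and whose endpoints are branch vertices (with loops allowed when the two endpoints coincide).

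I would then fix a BFS-spanning tree $T^*$ of $G^*$ rooted at a carefully chosen branch vertex $r$, and single out the $c$ non-tree edges as a feedback edge set of $G^*$. For each such non-tree edge $e^*$, I would place two resolvers on the corresponding branch-path $P_{e^*}$: one degree-$2$ vertex adjacent to each endpoint (or both near the same endpoint when $P_{e^*}$ is a loop). This gives at most $2c$ candidate resolvers; the refinement needed to reach the bound $2c-1$ is to argue that when $r$ is chosen so that the BFS distances from $r$ already separate the two sides of one fundamental cycle, one of these candidates becomes redundant and can be omitted, producing a set $S$ of size at most $2c-1$.

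Finally, I would verify that $S$ is simultaneously a resolving set and an edge-resolving set of $G$. Two objects (vertices or edges) lying on the same branch-path are distinguished by the closest resolver on that path, while two objects on distinct branch-paths are distinguished by comparing their distance vectors to the resolvers anchoring the two fundamental cycles that contain them. The main obstacle is handling \emph{parallel} branch-paths of equal length between the same pair of branch vertices (and two loops at a common branch vertex), where symmetric pairs of vertices have identical distances to every resolver lying outside the pair. Overcoming this requires positioning the two resolvers on each non-tree branch-path \emph{asymmetrically}, so that the BFS distance from $r$ breaks the symmetry between the two sides of each fundamental cycle. The extension to $\edim$ then follows from the analogous distance-to-edge computations, provided the asymmetric placement also distinguishes consecutive edges on each branch-path; small structured cases such as theta-graphs at $c = 2$ should be checked by hand to confirm that the bound $2c-1$ is indeed achieved and tight.
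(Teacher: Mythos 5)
The statement you are trying to prove is not a result of this paper at all: it is Conjecture~\ref{conj-mindeg2}, quoted from Sedlar and \v{S}krekovski, and it remains open here. The paper's own machinery (the good edge set of Lemma~\ref{lem:BFS}, Theorem~\ref{thm:doubly}, Corollary~\ref{cor:md:deg2}) only yields $\dim(G)\leq 2\cyclo{G}+1$ and $\edim(G)\leq 2\cyclo{G}+1$ for $\delta(G)\geq 2$, improving to $2\cyclo{G}$ when $G$ has a cut-vertex; the conjecture is known to hold when $\delta(G)\geq 3$ and reduces to the $2$-connected case, but the bound $2\cyclo{G}-1$ is exactly what nobody has been able to reach. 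Your proposal therefore claims strictly more than the state of the art, and the places where it would have to beat the known $2\cyclo{G}+1$ barrier are precisely the places you leave as assertions: (i) the step ``one of these candidates becomes redundant and can be omitted'' when the root is ``carefully chosen'' is not argued at all, and it is not clear it can be, since for $2$-connected graphs the BFS/good-edge-set scheme inherently spends two vertices per feedback edge plus the root; (ii) the resolution of symmetric pairs on parallel equal-length branch-paths (and on two cycles through a common branch vertex) by ``asymmetric placement'' is stated as a requirement, not proved, and this is the combinatorial core of the conjecture --- handling it is what makes the cactus and minimum-degree-$3$ cases in the literature nontrivial; (iii) the extension to $\edim$ is waved through, even though edge resolution behaves differently from vertex resolution (this is why the paper needs the separate Proposition~\ref{prop:edgeresolving}).

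There is also a technical flaw in the setup: you run BFS on the suppressed base multigraph $G^*$, but distances and shortest paths in $G^*$ do not coincide with those of $G$ once branch-paths have different lengths. The whole strength of the paper's technique (Lemma~\ref{lem:tree}) is that the BFS is performed in $G$ itself, so that every root-to-leaf path of $T_F$ is isometric \emph{in $G$}; a spanning tree of $G^*$ carries no such guarantee, so your later claim that objects on distinct branch-paths are separated by ``distance vectors to the resolvers anchoring the two fundamental cycles'' has no support. Your degree-counting ($n(G^*)\leq 2\cyclo{G}-2$, $m(G^*)\leq 3\cyclo{G}-3$) is correct but irrelevant to the bound you need, since you still place two resolvers per non-tree edge, i.e.\ $2\cyclo{G}$ vertices before the unproven saving. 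As written, the proposal does not prove the conjecture; at best it re-derives, with gaps, bounds of the same order as Theorem~\ref{thm:doubly} and Corollary~\ref{cor:md:deg2}.
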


In \cite{SedlarCyclo2}, Sedlar and \v{S}krekovski showed that Conjecture~\ref{conj-mindeg2} holds for graphs with minimum degree at least 3. They also showed that if Conjecture~\ref{conj-mindeg2} holds for all 2-connected graphs, then it holds for all graphs $G$ with $\delta (G) \geq 2$. %\textcolor{red}{where $\delta(G)$ is the minimum degree of the vertices in graph $G$}. 
Recently, Lu at al.~\cite{LuWeightedDoubly} addressed Conjecture~\ref{conj-mindeg2} and showed that $\dim (G) \leq 2c(G) + 1$ when $G$ has minimum degree at least 2.

\begin{conjecture}[\cite{SedlarMixedED}]\label{conj-mdim}
	 Let $G$ be a connected graph. If $G \neq C_n$, then $\mdim (G) \leq \leaf{G} + 2 \cyclo{G}$.
%	 where $\leaf{G}$ denotes the number of leaves in $G$.
% 	 . \todo{AH: The notation for \# leaves needs to be introduced before this.}
\end{conjecture}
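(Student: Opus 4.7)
The plan is to construct a mixed resolving set $S$ of size at most $\leaf{G} + 2\cyclo{G}$, extending the BFS-based technique of Lu et al.~\cite{LuWeightedDoubly} in the spirit of the rest of this paper. First, I include in $S$ all $\leaf{G}$ leaves of $G$, motivated by the fact that for any tree $T$ (not a path), the set of leaves of $T$ already forms a mixed resolving set, so this seed handles the ``acyclic part'' of $G$ correctly; for a pure path-subgraph component the two endpoints, already leaves, suffice.

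Next, I compute a BFS spanning tree $T$ of $G$ rooted at an appropriately chosen base vertex, and let $F = E(G) \setminus E(T)$ be the associated feedback edge set, so $|F| = \cyclo{G}$. For each edge $e = uv \in F$, the graph $T + e$ contains a unique fundamental cycle $C_e$; I would designate (up to) two carefully chosen vertices on $C_e$, typically a pair of near-antipodal vertices in the BFS order (for instance, one endpoint of $e$ and a vertex of $C_e$ that is farthest from the base in $T$), and add them to $S$. This contributes at most $2\cyclo{G}$ further vertices and yields $|S| \leq \leaf{G} + 2\cyclo{G}$. To verify that $S$ is a mixed resolving set, I would split distinct elements $x,y \in V(G) \cup E(G)$ into cases according to whether each lies in the ``tree part'' of $G$ (avoiding the fundamental cycles) or touches one of the cycles $C_e$: tree-part pairs are separated by the leaves using the standard tree argument, while pairs touching a cycle are separated by the extra cycle vertices, which intuitively fix a direction along the cycle, pin down the exact position of a vertex or edge on it, and separate each vertex from any incident edge.

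The main obstacle will be the global interaction between fundamental cycles: when two cycles $C_e$ and $C_{e'}$ share vertices, edges, or whole subpaths, the ``two vertices per cycle'' budget cannot be spent independently. Following the spirit of the Lu et al.\ technique, I would reuse shared vertices whenever possible, proving that a cycle vertex chosen for one cycle already plays the resolving role for another when their structures overlap, keeping the total within $2\cyclo{G}$; this reuse argument is precisely what the BFS ordering and the choice of ``base'' are designed to enable. A second delicate point is the vertex-edge distinction intrinsic to $\mdim$: for every edge $e$ on a cycle, some $s \in S$ must distinguish $e$ from both of its endpoints, which constrains the cycle vertices to lie off specific shortest-path structures, and the proof must verify this simultaneously across all overlapping cycles. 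Finally, the exclusion $G \neq C_n$ is essential since $\mdim(C_n) = 3$ while $\leaf{C_n} + 2\cyclo{C_n} = 2$; this is exactly the base case where the antipode strategy is tightest, and the construction must treat it as a separate, trivial case before inductively handling graphs with $\cyclo{G} \geq 1$ and $\leaf{G} \geq 1$ or $\cyclo{G} \geq 2$.
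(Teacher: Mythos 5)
There is a genuine gap, and in fact a mismatch with the status of the statement itself. What you are trying to prove is a \emph{conjecture} from the literature; the paper does not prove it in full. It proves $\mdim(G)\leq\leaf{G}+2\cyclo{G}$ when $\delta(G)=1$ (Theorem~\ref{thm:mindeg1} and Corollary~\ref{cor:md:deg1}, working in the base graph $G_b$ and using the fact that the leaves form a mixed resolving set of each pendant tree), and when $\delta(G)\geq 2$ and $G$ has a cut-vertex (Proposition~\ref{prop:mixedresolving} plus Observation~\ref{obs:cut}); in the remaining case ($\delta(G)\geq 2$, no cut-vertex) it only obtains $\mdim(G)\leq 2\cyclo{G}+1$, i.e.\ one more than conjectured, and that case is left open. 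So a correct argument along your lines would have to do strictly more than the paper does, and your sketch does not.

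Concretely, the paper's construction is not ``two near-antipodal vertices per fundamental cycle'': it takes a \emph{good edge set} $F$ with respect to a BFS root $r$ (all horizontal edges plus all but one edge of each $B_r(u)$), puts \emph{both endpoints of every edge of $F$} into $S$, and additionally needs the root $r$; the verification (Theorem~\ref{thm:doubly}, Propositions~\ref{prop:edgeresolving} and~\ref{prop:mixedresolving}) leans heavily on the facts that every leaf of $T_F$ is in $S$, that root-to-leaf paths of $T_F$ are isometric, and that both endpoints of horizontal edges are available to separate vertex--edge pairs. The extra vertex $r$ is exactly the $+1$, and it can be discarded only when $r$ can be chosen as a cut-vertex. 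Your proposal replaces all of this by ``one endpoint of $e$ and a vertex of $C_e$ farthest from the base'' per fundamental cycle, and the two hardest points --- (i) that this selection still resolves every vertex--edge and edge--edge pair on the cycles, and (ii) that when fundamental cycles overlap the budget of $2\cyclo{G}$ can be respected by ``reusing'' vertices --- are precisely the steps you defer with ``I would prove''. These are not routine: overlapping fundamental cycles is the reason earlier works lost large constant factors, and the good-edge-set formalism is what tames it; without it your case analysis has no actual argument behind it. In short, the proposal is a plan whose unproved steps coincide with the genuinely open part of the conjecture (2-connected graphs of minimum degree at least~2), so it cannot be accepted as a proof. (Your observation that $C_n$ must be excluded, since $\mdim(C_n)=3>2=\leaf{C_n}+2\cyclo{C_n}$, is correct.)
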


Conjecture~\ref{conj-mdim} is known to hold for trees~\cite{KelencMixed}, cacti and 3-connected graphs~\cite{SedlarMixedED}, and balanced theta graphs~\cite{SedlarMdimCyclomatic}.

The following conjecture on distance-edge-monitoring sets was also posed recently.

\begin{conjecture}[\cite{DEM1,DEM1conf}]\label{conj-dem}
	 For any graph $G$, $\dem{G}\leq\cyclo{G}+1$.
\end{conjecture}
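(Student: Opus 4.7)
The plan is to exploit a breadth-first search, in the spirit of the general approach of this paper. Fix an arbitrary vertex $r \in V(G)$, run BFS from $r$, let $T$ denote the resulting BFS tree, and let $F = E(G) \setminus E(T)$ be the set of non-tree edges; since $G$ is connected, $|F| = \cyclo{G}$. Because BFS levels are exact distances in $G$, every edge of $G$ joins two vertices whose BFS levels differ by at most~$1$, so each $f \in F$ is either a \emph{cross edge} (both endpoints at the same level) or a \emph{forward edge} (endpoints at two consecutive levels).

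Given this, I would take
\[ S = \{r\} \cup \{u_f : f \in F\}, \]
where $u_f$ is the deeper endpoint of $f$ when $f$ is forward, and an arbitrary endpoint when $f$ is a cross edge. In particular $|S| \leq \cyclo{G} + 1$.

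To verify that $S$ is a distance-edge-monitoring set, I would split the edges into two kinds. First, every non-tree edge $f$ with endpoints $u_f, u_f'$ is itself the unique shortest path between its endpoints (of length $1$), so $f$ is monitored by $(u_f, u_f')$. Second, for a tree edge $e = uv$ with $u$ the BFS parent of $v$, I distinguish two sub-cases. If $u$ is the unique neighbor of $v$ at level $\dist{r}{v} - 1$, then the penultimate vertex of any shortest $r$-$v$ path, which must be a neighbor of $v$ at the previous BFS level, has to be $u$, so $e$ lies on every shortest $r$-$v$ path and is monitored by $(r,v)$. Otherwise $v$ has a neighbor $u' \neq u$ at level $\dist{r}{v} - 1$; then $u'v$ is a forward non-tree edge whose deeper endpoint is $v$, hence $v \in S$, and $e$ is monitored by $(v,u)$ via the trivial length-one path.

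The main subtlety, rather than an obstacle, is hidden in the second sub-case above: even when a single vertex $v$ is the deeper endpoint of several forward non-tree edges, one copy of $v$ in $S$ simultaneously takes care of all tree edges entering $v$ together with all those forward edges, which is precisely why the bound stays at $\cyclo{G} + 1$ rather than growing to $2\cyclo{G}+1$. A natural follow-up, not settled by this simple construction, is whether the ``$+1$'' can sometimes be removed (e.g.\ in $2$-edge-connected graphs), and whether the same BFS-based selection principle can be adapted to the stronger notion of a monitoring edge-geodetic set that also appears in the paper.
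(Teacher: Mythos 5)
Your construction is exactly the paper's: the non-tree edges of a BFS tree rooted at $r$ form precisely a good edge set with respect to $r$, and the paper likewise takes $S$ to be the root plus one endpoint per such edge (the deeper endpoint for vertical/forward edges, an arbitrary one for horizontal/cross edges), then monitors the remaining tree edges via $r$ exactly as in your first sub-case. The proposal is correct and essentially identical to the paper's proof, the only cosmetic difference being that you argue directly on all connected graphs (handling the several-back-neighbours case explicitly), whereas the paper first passes to the base graph and cites the known case $\cyclo{G}\leq 2$.
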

\textcolor{black}{Here, $\dem{G}$ is the the smallest size of a distance-edge-monitoring set of graph $G$.}
The original authors of the conjecture proved the bound when $\cyclo{G}\leq 2$, and proved that the bound $\dem{G}\leq2\cyclo{G}-2$ holds when $\cyclo{G}\geq 3$~\cite{DEM1}. The conjectured bound would be tight~\cite{DEM1,DEM1conf}.

Regarding monitoring edge-geodetic sets, although no formal conjecture was explicitly posed, the bound $\meg{G}\leq 9\cyclo{G}+\leaf{G}-8$ was proved in~\cite{dev2023monitoring,MEG1}, and it was asked whether this can be improved.

% [htpb!]
\begin{figure}[t]
\centering
\scalebox{0.7}{\begin{tikzpicture}[node distance=7mm]

\tikzstyle{mybox}=[fill=white,line width=0.5mm,rectangle, minimum height=.8cm,fill=white!70,rounded corners=1mm,draw];%[rectangle, minimum height=.8cm,fill=white!70,rounded corners=3mm,draw]
\tikzstyle{myedge}=[line width=0.5mm,->]
\newcommand{\tworows}[2]{\begin{tabular}{c}{#1}\\{#2}\end{tabular}}

    \node[mybox,fill=gray!20] (meg) {\tworows{monitoring}{edge-geodetic number}};
    \node[mybox,fill=gray!20] (dem)  [below left=of meg] {\tworows{distance}{edge-monitoring number}}  edge[myedge] (meg);
    \node[mybox,fill=gray!20] (isopec) [below right=of meg,xshift=1mm] {\tworows{isometric path}{edge-cover number}};
    \node[mybox,fill=gray!20] (isopp) [right=of isopec,xshift=35mm] {\tworows{isometric path}{partition number}};
    \node[mybox] (isopc) [right=of isopec,yshift=-10mm] {\tworows{isometric path}{cover number}} edge[myedge] (isopec) edge[myedge] (isopp);
    \node[mybox] (indpec) [below=of isopec, yshift=-20mm] {\tworows{induced path}{edge-cover number}} edge[myedge] (isopec);
    \node[mybox] (indpp) [below=of isopp,yshift=-10mm] {\tworows{induced path}{partition number}} edge[myedge] (isopp);
    \node[mybox] (indpc) [below=of isopc,yshift=-20mm] {\tworows{induced path}{cover number}} edge[myedge] (isopc) edge[myedge] (indpec) edge[myedge] (indpp);
    \node[mybox] (pec) [below=of indpec,yshift=-10mm] {\tworows{path}{edge-cover number}} edge[myedge] (indpec);
    \node[mybox] (pp) [below=of indpp,yshift=-10mm] {\tworows{path}{partition number}} edge[myedge] (indpp);
    \node[mybox] (pc) [below=of indpc,yshift=-10mm] {\tworows{path}{cover number}} edge[myedge] (indpc) edge[myedge] (pec) edge[myedge] (pp);
    
    \node[mybox] (segs) [left=of isopec,yshift=-10mm] {\tworows{strong edge-geodetic}{number}} edge[myedge] (isopec) edge[myedge] (meg);
    \node[mybox] (sgs) [below=of segs,yshift=-5mm] {\tworows{strong geodetic}{number}} edge[myedge] (isopc) edge[myedge] (segs);
    \node[mybox] (egs) [left=of segs,yshift=-10mm] {\tworows{edge-geodetic}{number}} edge[myedge] (segs);
    \node[mybox,fill=gray!20] (gs) [below=of egs,yshift=-5mm] {\tworows{geodetic}{number}} edge[myedge] (egs) edge[myedge] (sgs);

    \node[mybox,fill=gray!20] (mmd) [below=of gs,xshift=20mm] {\tworows{mixed metric}{dimension}};

    %\node[mybox] (smd) [left=of mmd,xshift=0mm] {\tworows{strong metric}{dimension}};
    %\node[mybox] (mspc) [below=of smd,xshift=-15mm] {\tworows{maximal shortest}{path cover number}} edge[myedge] (smd);

    \node[mybox,fill=gray!20] (md) [below=of mmd,xshift=-15mm] {\tworows{metric}{dimension}} edge[myedge] (mmd);% edge[myedge] (smd);
    \node[mybox,fill=gray!20] (emd) [below=of mmd,xshift=15mm] {\tworows{edge metric}{dimension}} edge[myedge] (mmd);

  \end{tikzpicture}}

\caption{Relations between the graph parameters discussed in the paper. If a parameter $A$ has a directed path to parameter $B$, then for any graph, the value of $A$ is upper-bounded by a linear function of the value of $B$. The problems in gray boxes are explicitly studied in this paper; bounds for the other problems follow from these results.}
\label{fig:diagram}
\end{figure}

%\subsection{Our contributions}

\paragraph{Our contributions.} In this paper, we are motivated by Conjectures~\ref{conj-dimedim}-\ref{conj-dem}, which we address. 
We will show that both $\dim (G)$ and $\edim (G)$ are bounded from above by $\branchres{G} + 2 \cyclo{G} + 1$ for all connected graphs $G$. Moreover, we show that if $\branchres{G} \neq 0$, then the bounds of Conjecture~\ref{conj-dimedim} hold. 
 
We show that Conjecture~\ref{conj-mdim} is true when $\delta (G) = 1$, and when $\delta (G) \geq 2$ and $G$ contains a cut-vertex. We also show that $\mdim (G) \leq 2 \cyclo{G} + 1$ in all other cases. We also consider the first part of Conjecture~\ref{conj-dimedim}, that $\dim (G) \leq \branchres{G} + 2 \cyclo{G}$ from~\cite{SedlarCacti}, in the case where $\delta (G) = 1$, and we show that it is true when $\branchres{G} \geq 1$ and otherwise we have $\dim (G) \leq 2 \cyclo{G} + 1$. We also consider the conjecture that $\edim (G) \leq \branchres{G} + 2 \cyclo{G}$ from~\cite{SedlarCacti}, and we show that it is true when $\delta (G) = 1$ and $\branchres{G} \geq 1$, and when $\delta (G) \geq 2$ and $G$ contains a cut-vertex. We also show that $\edim (G) \leq 2 \cyclo{G} + 1$ in all other cases. 
 
 Thus, our results yield significant improvements towards the Conjectures~\ref{conj-dimedim}-\ref{conj-mdim}, since they are shown to be true in most cases, and are approximated by an additive term of 1 for all graphs.
 
 Moreover, we also resolve in the affirmative Conjecture~\ref{conj-dem}.

To obtain the above results, we develop a technique from~\cite{LuWeightedDoubly}, who introduced it in order to study a strengthening of metric dimension called \emph{doubly resolving sets} in the context of graphs of minimum degree~2. We notice that the technique can be adapted to work for all graphs and in fact it applies to many types of problems: (variants of) metric dimension, (variants of) geodetic sets, and path-covering problems. For all these problems, the technique yields upper bounds of the form $a\cdot\cyclo{G}+f(\ell(G))$, where $\ell(G)$ is the number of leaves of $G$, $f$ is a linear function that depends on the respective problem, and $a$ is a small constant.

The technique is based on a breadth-first-search rooted at a specific vertex, that enables to compute an optimal feedback edge set $F$ by considering the edges of the graph that are not part of the breadth-first-search spanning tree. We then select vertices of the edges of $F$ (or neighbouring vertices); the way to select these vertices depends on the problem. For the metric dimension and path-covering problems, a pre-processing is done to handle the leaves of the graph (for the geodetic set variants, all leaves must be part of the solution). Our results demonstrate that the techniques used by most previous works to handle graphs of bounded cyclomatic number were not precise enough, and the simple technique we employ is much more effective. We believe that it can be used with sucess in similar contexts in the future.

A preliminary version of this paper (without most of the proofs) appeared in the proceedings of the FCT 2023 conference~\cite{CFH23}.

\paragraph{Algorithmic applications.}
For all the considered problems, our method in fact implies that the optimal solutions can be computed in polynomial time for graphs with bounded cyclomatic number. In other words, we obtain XP algorithms with respect to the cyclomatic number. This was already observed in~\cite{EpsteinWeighted} for the metric dimension (thanks to our improved bounds, we now obtain a better running time, however it should be noted that in~\cite{EpsteinWeighted} the more general weighted version of the problem was considered).

\paragraph{Organisation.} We first describe the general method to compute the special feedback edge set in Section~\ref{sec:method}. We then use it in Section~\ref{sec:md} for the metric dimension and its variants. We then turn to geodetic sets and its variants in Section~\ref{sec:geod}, and to path-covering problems in Section~\ref{sec:paths}. We describe the algorithmic consequence in Section~\ref{sec:algo}, and conclude in Section~\ref{sec:conclu}.

\section{The general method}\label{sec:method}

% \todo[inline]{F: I suggest to put here a general and nice description of the generalized version of the algorithm (that works not just for 2-connected graphs): 1) do a BFS in some root (sometimes we may choose a specific type of vertex for the root), 2) compute the leaves 3) compute the backwards edges and the horizontal edges 4) compute a set of horizontal edges and backwards edges of size $\cyclo{G}$ that is a FES}

The \emph{length} of a path $P$, denoted by $|P|$, is the number of its vertices minus  one. A path is \emph{induced} if there are no graph edges joining non-consecutive vertices. A path is \emph{isometric} if it is a shortest path between its endpoints.  For two vertices $u,v$ of a graph $G$, $\dist{u}{v}$ denotes the length of an isometric path between $u$ and $v$. 
Let $r$ be a vertex of $G$. An edge $e=uv\in E(G)$ is a \emph{horizontal} edge \emph{with respect to} $r$ if $\dist{u}{r} = \dist{v}{r}$ {(otherwise, it is a \emph{vertical edge})}. For a vertex $u$ of $G$, let $B_r(u)$ denote the set of edges $uv\in E(G)$ such that  $\dist{u}{r} = \dist{v}{r} + 1$, \ah{i.e., the vertical edges where one endpoint is $u$ and the other endpoint $v$ is closer to $r$ than $u$}. A set $F$ of edges of $G$ is \emph{good with respect to} $r$ if $F$ contains all horizontal edges with respect to $r$ and for each $u\neq r$, $|B_r(u) \cap F| = |B_r(u)| - 1$. A set $F$ of edges is simply \emph{good} if $F$ is good with respect to some vertex $r\in V(G)$. 
%See Figure~\ref{fig:good_edge} for an example. 
For a set $F$ of  good edges of a graph $G$, let $T_F$ denote the subgraph of $G$ obtained by removing the edges of $F$ from $G$.

\ah{For example, consider the graph in Figure~\ref{fig:good_edge}. We have rooted the graph at vertex $r$, and the vertices at the same distance from~$r$ are on the same level next to each other. Drawn this way, the horizontal edges with respect to~$r$ are horizontal and the vertical edges are either vertical or diagonal. For example, the edge~$v_1v_2$ is a horizontal edge since the distance from~$r$ is~$3$ to both of its endpoints. The vertex~$v_3$ is an endpoint to three vertical edges but no horizontal edges. The set~$B_r(u)$ consists of vertical edges whose endpoint is~$u$ and the other endpoint is a vertex on the level above~$u$. For example, $B_r(v_3)$ consists of the two vertical edges between $v_3$ and two of the neighbors of~$r$. The third vertical edge incident with $v_3$, $v_1v_3$, is not included in this set since $v_1$ is further away from $r$ and, in fact, we have $v_1v_3 \in B_r (v_1)$. Notice that the sets $B_r(u)$ form a partition of the vertical edges. The red edges form a good edge set with respect to~$r$ as it contains all horizontal edges and the required number of vertical edges.}

\begin{figure}
    \centering
    \begin{tikzpicture}
		\coordinate[label=left:$r \ $] (r) at (0,0);
		\coordinate (u1) at (-1,-1);
		\coordinate (u2) at (0,-1);
		\coordinate (u3) at (1,-1);
		\coordinate (u4) at (1,-2);
		\coordinate (u5) at (0,-2);
		\coordinate[label=left:$v_1 \ $] (v1) at (-.5,-3);
		\coordinate[label=right:$ \ v_2$] (v2) at (.5,-3);
		\coordinate[label=left:$v_3 \ $] (v3) at (-1,-2);
		\draw[thick] (r) -- (u1) -- (v3) -- (v1)   (u4) -- (u3) -- (r) -- (u2) -- (u5) -- (v2);
		\draw[thick, red] (v1) -- (u5) -- (u4) -- (v2) -- (v1)  (u2) -- (v3);
		\draw \foreach \x in {(r),(u1),(u2),(u3),(u4),(u5),(v1),(v2),(v3)} {
			\x node[circle, draw, fill=white,
			inner sep=0pt, minimum width=6pt] {}
		};
	\end{tikzpicture}
%    \begin{tikzpicture}
%  
%  \node[above] at (0,0) {$r$};
%  
%  \foreach \x/\y/\w/\z [count = \n] in
%  {0/0/-2/-1, 0/0/-1/-1, 0/0/0/-1, 0/0/1/-1, 0/0/2/-1, -2/-1/-1.5/-2, -1/-1/-0.5/-2, 0/-1/0.5/-2, 2/-1/1.5/-2, 0/0/-3/-1, -3/-2/-3/-1}
%  {
%
%	\draw[thick] (\x, \y) -- (\w,\z);
%  	
%  }
%  \foreach \x/\y/\w/\z [count = \n] in
%  { -1/-1/-1.5/-2, 0/-1/-0.5/-2, 1/-1/0.5/-2, 2/-1/0.5/-2, 1/-1/1.5/-2, -2/-1/-1/-1, -1/-1/0/-1, 1/-1/2/-1, 1.5/-2/-0.5/-2, -1.5/-2/-0.5/-2 }
%  {
%
%	\draw[thick, red] (\x, \y) -- (\w,\z);
%  	
%  }
%  
% 
%    \foreach \x/\y [count = \n] in
%  {0/0, -2/-1, -1/-1, 0/-1, 1/-1, 2/-1, -1.5/-2, -0.5/-2, 0.5/-2, 1.5/-2, -3/-1, -3/-2}
%  {
%
%	\filldraw (\x, \y) circle (2pt);
%  	
%  }
%  
%    \end{tikzpicture}
    \caption{\textcolor{black}{Vertices in the same horizontal line are equidistant from the root $r$. }The edges drawn in red indicate a set of good edges, \textcolor{black}{and these edges form a feedback edge set. Hence, a tree is obtained by removing the red edges from the given graph. }}
    \label{fig:good_edge}
\end{figure}
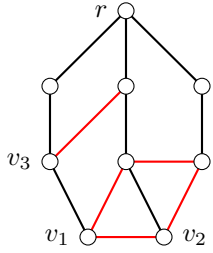

\begin{lemma}\label{lem:BFS}
For any connected graph $G$ with $n$ vertices and $m$ edges and a vertex $r\in V(G)$, \ah{a set $F$ of good edges} with respect to $r$ can be computed in $O(n+m)$ time.
\end{lemma}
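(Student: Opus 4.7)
The plan is to observe that a good edge set with respect to $r$ is essentially the co-tree of a breadth-first-search (BFS) spanning tree rooted at $r$. I will first run BFS from $r$, which in $O(n+m)$ time produces, for every vertex $u$, both its distance $\dist{u}{r}$ and a designated BFS parent $p(u)$, chosen among the neighbours of $u$ that were discovered first. The crucial property of BFS is that $p(u)$ is a neighbour of $u$ with $\dist{p(u)}{r} = \dist{u}{r} - 1$, so the edge $up(u)$ lies in $B_r(u)$.

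Let $T$ denote the spanning tree formed by the edges $\{up(u) : u \neq r\}$, and set $F := E(G) \setminus E(T)$. To verify the two conditions defining a good set, I would examine an arbitrary edge $uv \in E(G)$ and split by its type. If $\dist{u}{r} = \dist{v}{r}$ (a horizontal edge), then $uv$ cannot be of the form $xp(x)$ since that requires the endpoints to lie on consecutive BFS levels, hence $uv \in F$. For the second condition, fix a non-root vertex $u$; the only tree edge in $B_r(u)$ is $up(u)$ itself (the other edges of the tree incident to $u$ go to children of $u$ and thus lie in $B_r(w)$ for those children $w$, not in $B_r(u)$). Consequently $|B_r(u) \cap E(T)| = 1$, which gives $|B_r(u) \cap F| = |B_r(u)| - 1$, as required.

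The main (and only) obstacle is really a bookkeeping one: making sure the classification of each edge as horizontal, parent-edge, or extra vertical edge can be performed within the linear budget. Once BFS is complete, one pass over the adjacency lists suffices, since for each edge $uv$ we can read off $\dist{u}{r}$ and $\dist{v}{r}$ in constant time and compare them with the stored parent pointer $p(u)$ (or $p(v)$) to decide whether $uv$ belongs to $T$ or to $F$. This yields the desired $O(n+m)$ running time. Note that the triangle inequality (edges connect vertices whose BFS-distances to $r$ differ by at most~$1$) already rules out the possibility of an edge $uv$ with $|\dist{u}{r} - \dist{v}{r}| \geq 2$, so no other case arises.
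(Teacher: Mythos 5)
Your proof is correct and follows essentially the same approach as the paper: a BFS from $r$ to obtain distances, followed by a linear-time classification of edges, with one vertical edge per set $B_r(u)$ kept out of $F$. Your formulation via the complement of the BFS tree (keeping exactly the parent edge $up(u)$ from each $B_r(u)$) is simply a concrete instantiation of the paper's selection of all horizontal edges plus all-but-one edge from each $B_r(u)$, verified with a bit more detail than the paper provides.
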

\begin{proof}
By doing a Breadth First Search on $G$ from $r$, distances of $r$ from $u$ for all $u\in V(G)$ can be computed in $O(n+m)$ time. Then the horizontal and vertical edges can be computed in $O(m)$ time. Then the sets $B_r(u)$ for all $u\in V(G)$ can be computed in $O(n+m)$ time. Hence the set of good edges  with respect to $r$ can be computed in $O(n+m)$ time.
\end{proof}

\begin{lemma}\label{lem:tree}
For a set $F$ of good edges with respect to a vertex $r$ of a {connected} graph $G$, the subgraph $T_F$ is a tree rooted at $r$. Moreover, every path from $r$ to a leaf of $T_F$ is an isometric path in $G$.
\end{lemma}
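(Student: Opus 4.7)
The plan is to use the BFS layering induced by $r$. Let $L_i=\{v\in V(G):\dist{v}{r}=i\}$ for $i\geq 0$, so $L_0=\{r\}$. Every edge of $G$ is either horizontal (both endpoints in some $L_i$) or vertical (endpoints in consecutive layers $L_{i-1}$ and $L_i$); in the latter case, the edge lies in $B_r(u)$ where $u$ is the endpoint in the deeper layer. In particular, for each $u\neq r$ the set $B_r(u)$ is nonempty, since connectedness forces $u$ to have a neighbour one step closer to $r$.

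I would first record what survives in $T_F$. Because $F$ is good, all horizontal edges are deleted, and for each $u\neq r$ exactly one edge of $B_r(u)$ remains. Thus every $u\neq r$ has a unique neighbour $\pi(u)\in L_{d(u,r)-1}$ in $T_F$; call $\pi(u)$ the parent of $u$. The number of edges of $T_F$ is therefore $\sum_{u\neq r}1=n-1$.

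Next I would show connectedness by induction on the layer index. The base case $L_0=\{r\}$ is trivial; assuming every vertex of $L_{i-1}$ is connected to $r$ in $T_F$, any $u\in L_i$ is joined in $T_F$ to its parent $\pi(u)\in L_{i-1}$, hence to $r$. So $T_F$ is connected on $n$ vertices with $n-1$ edges, and is therefore a spanning tree of $G$; orienting each non-root vertex towards its parent makes it a rooted tree at $r$.

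Finally, for the isometric property, consider the unique path $P=rw_1w_2\cdots w_k=v$ in $T_F$ from $r$ to any vertex $v$. Every edge of $P$ is vertical (since $F$ removed all horizontal edges) and, by definition of $\pi$, consecutive vertices lie in consecutive layers. Hence the layer index can change by exactly $\pm 1$ along $P$; since $P$ starts in $L_0$ and ends in $L_k$, the indices must be strictly increasing, giving $k=\dist{v}{r}$. Thus $P$ is an isometric path in $G$, which gives the claim for every root-to-leaf path of $T_F$. The only mildly subtle point is ruling out oscillation between layers along $P$, which is handled by the observation that each step from $w_{i-1}$ to $w_i$ uses the unique surviving edge of $B_r(w_i)$, forcing $w_{i-1}=\pi(w_i)$ to be one layer closer to $r$ than $w_i$.
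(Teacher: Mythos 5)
Your proof is correct and follows essentially the same backbone as the paper's: since $F$ contains all horizontal edges and all but one edge of each $B_r(u)$, every non-root vertex keeps exactly one edge to the previous BFS layer (its parent), which gives connectedness layer by layer, and strictly increasing layer indices along root-to-leaf paths give isometry. The one place you genuinely diverge is acyclicity: the paper argues by contradiction, taking a deepest vertex $v$ on a hypothetical cycle of $T_F$ and noting that $v$ would need either a surviving horizontal edge or two surviving edges of $B_r(v)$, contradicting goodness; you instead count edges --- the sets $B_r(u)$, $u\neq r$, partition the vertical edges and each is nonempty, so exactly $n-1$ edges survive, and a connected graph on $n$ vertices with $n-1$ edges is a tree. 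Both work; your counting route is arguably cleaner, at the price of having to state explicitly that every surviving edge lies in exactly one $B_r(u)$ with $u\neq r$, which you do. One small point to make fully explicit: in the final step, when you assert that the edge $w_{i-1}w_i$ is the surviving edge of $B_r(w_i)$ rather than of $B_r(w_{i-1})$, the actual reason is that the unique surviving edge of $B_r(w_{i-1})$ is $w_{i-2}w_{i-1}$, already used by the path (and $r$ has no such edge for the first step), so a path cannot take it again; as written, your justification restates the conclusion rather than proving it. This is routine --- the paper's own proof is equally terse here, saying only ``by definition'' --- but it deserves a sentence.
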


\begin{proof}
First observe that $T_F$ is connected, as each vertex $u$ has exactly one edge $uv\in E(T_F)$ with $\dist{u}{r} = \dist{v}{r} + 1$. \ah{Indeed, removing horizontal edges or edges in $B_r(u)$ does not affect the sets $B_r(u')$ where $u' \neq u$.} Now assume for contradiction that $T_F$ has a cycle $C$. Let $v\in V(C)$ be a vertex that is furthest from $r$ among all vertices of $C$. Formally, $v$ is a vertex such that $\dist{r}{v} = \max \{ \dist{r}{w} \colon w\in V(C) \}$. Let $E'$ denote the set of edges in $T_F$ incident with $v$. Observe that $|E'|$ is at least two. Hence either $E'$ contains an horizontal edge, or $E' \cap B_r(v)$ contains at least two edges. Either case contradicts that $F$ is a good edge set with respect to $r$. This proves the first part of the observation. 
 
Now consider a path $P$ from $r$ to a leaf $v$ of $T_F$ and write it as $u_1 u_2\ldots u_k$ where $u_1=r$ and $u_k=v$.
By definition, we have $\dist{r}{u_i} = \dist{r}{u_{i-1}} + 1$ for each $i\in [2,k]$. Hence, $P$ is an isometric path in $G$. 
\end{proof}

\begin{observation}\label{obs:cyclomatic}
 Any set $F$ of good edges of a connected graph $G$ is a {feedback} edge set of $G$ with minimum cardinality.
%  \todo{AH: I really think that min degree two should be removed. After all, in the proofs of the geodetic number section there doesn't seem to be any assumptions about the minimum degree.}
%  \todo{AH: I think here we should have minimum degree at least two and not biconnected. F: I agree. In fact, even more, can we not simply assume that $r$ is part of some cycle? D: Do I need to change anything else?} 
\end{observation}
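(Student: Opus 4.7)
The plan is to deduce the observation directly from Lemma~\ref{lem:tree} together with the standard size formula for the cyclomatic number of a connected graph. First I would note that the subgraph $T_F = G - F$ has vertex set $V(G)$ by construction, and by Lemma~\ref{lem:tree} it is a tree; hence $T_F$ is in fact a spanning tree of $G$. Since $T_F$ is acyclic, removing $F$ from $G$ destroys every cycle, which already shows that $F$ is a feedback edge set.

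The remaining step is to verify that $|F|$ attains the minimum possible value $\cyclo{G}$. Since $T_F$ is a spanning tree of the connected graph $G$ on $n$ vertices, it has exactly $n-1$ edges, so $|F| = m - (n-1) = m - n + 1$. For a connected graph, this equals the cyclomatic number $\cyclo{G}$, which by definition is the minimum size of a feedback edge set (a forest on $n$ vertices has at most $n-1$ edges, so any feedback edge set must remove at least $m - (n-1)$ edges). Therefore $F$ is a minimum feedback edge set, which completes the proof.

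I do not foresee any real obstacle: the key structural content — that $T_F$ is a tree spanning $V(G)$ — is already packaged in Lemma~\ref{lem:tree}, and the rest is the standard edge-count identity for spanning trees together with the definition of $\cyclo{G}$.
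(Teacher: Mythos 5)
Your proposal is correct and follows essentially the same route as the paper: invoke Lemma~\ref{lem:tree} to get that $T_F$ is a (spanning) tree, conclude $|F| = m - n + 1$, and identify this with the minimum feedback edge set size of a connected graph. You simply spell out the standard edge-count justification in slightly more detail than the paper does.
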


\begin{proof}
Due to Lemma~\ref{lem:tree} we have that $T_F$ is a tree and therefore $|F|=m-n+1$ which is same as the cardinality of a {feedback} edge set of $G$ with minimum cardinality.
\end{proof}

% \todo[inline]{AH: About Observation \ref{obs:cyclomatic}: Now that I thought more about this, I think that we don’t need any specific assumptions about G. The fact that $F$ is a feedback edge set of minimum cardinality follows directly from the definitions and Observation 1, and we don’t need G to have minimum degree 2 or choose the root r in any specific way. Indeed, I think that in Observation 2 the graph G should be just connected so that rooting it with one root makes sense. (At some point I do use Observation 2 specifically for graphs with minimum degree at least 2, but that is just because I need the leaves of the tree $T_F$ to be in the constructed set.)}

The \emph{base graph}~\cite{EpsteinWeighted} $G_b$ of a graph $G$ is the graph obtained from $G$ by iteratively removing vertices of degree~$1$ until there remain no such vertices. We use the base graph in some cases where preprocessing the leaves and other tree-like structures is needed.

\section{Metric dimension and variants}\label{sec:md}

In this section, we consider three metric dimension variants and conjectures regarding them and the cyclomatic number. We shall use the following result.

Distinct vertices $x,y$ are \emph{doubly resolved} by $v,u \in V(G)$ if $ \dist{v}{x} - \dist{v}{y} \neq \dist{u}{x} - \dist{u}{y} $. A set $S \subseteq V(G)$ is a \emph{doubly resolving set} of $G$ if every pair of distinct vertices of $G$ are doubly resolved by a pair of vertices in $S$. Lu et al.~\cite{LuWeightedDoubly} constructed a doubly resolving set of $G$ with $\delta (G) \geq 2$ by finding a good edge set with respect to a root $r \in V(G)$ using breadth-first search. We state a result obtained by  Lu et al.~\cite{LuWeightedDoubly} using the terminologies of this paper. 

\begin{theorem}[\cite{LuWeightedDoubly}]\label{thm:doubly}
	Let $G$ be a connected graph such that $\delta (G) \geq 2 $ and $r \in V(G)$. Let $S \subseteq V(G)$ consist of $r$ and the endpoints of the edges of a good edge set with respect to $r$.
	\begin{enumerate}[label=(\roman*)]
		\item The set $S$ is a doubly resolving set of $G$.
		\item If $r$ is a cut-vertex, then the set $S\setminus \{r\}$ is a doubly resolving set of $G$.
		\item We have  $|S| \leq 2 \cyclo{G} + 1$.
	\end{enumerate}
\end{theorem}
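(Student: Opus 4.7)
The plan is to exploit the BFS-tree structure from Lemma~\ref{lem:tree}, where every root-to-leaf path in $T_F$ is isometric in $G$. Part (iii) follows directly from Observation~\ref{obs:cyclomatic}: since $|F| = \cyclo{G}$, we have $|V(F)| \le 2\cyclo{G}$, and so $|S| \le 2\cyclo{G} + 1$.

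For Part (i), the key preliminary observation is that since $\delta(G) \ge 2$, every leaf of $T_F$ lies in $V(F)$, hence in $S$: a non-root leaf has $G$-degree at least $2$ but $T_F$-degree exactly $1$, so at least one of its incident edges belongs to $F$. Given distinct $x, y \in V(G)$ with $\dist{r}{x} \le \dist{r}{y}$, I would argue by cases on the relative position of $x$ and $y$ in $T_F$. If $x$ is an ancestor of $y$ in $T_F$, then picking any leaf $\ell$ descending from $y$ yields an isometric $r$-$\ell$ path in $G$ passing through both $x$ and $y$, so $\dist{\ell}{x} - \dist{\ell}{y} = \dist{r}{y} - \dist{r}{x}$, which differs from $\dist{r}{x} - \dist{r}{y}$ because $\dist{r}{x} < \dist{r}{y}$. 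When $x$ and $y$ are incomparable in $T_F$, the isometric path property combined with the triangle inequality gives $\dist{\ell_y}{x} - \dist{\ell_y}{y} \ge \dist{r}{y} - \dist{r}{x} \ge 0$ for any leaf $\ell_y$ of $T_F$ descending from $y$, so $\{r, \ell_y\}$ doubly resolves $\{x, y\}$ unless $\dist{r}{x} = \dist{r}{y}$ and $x$ lies on some isometric $r$-to-$\ell_y$ path in $G$. In this degenerate subcase, the symmetric argument with $\ell_x$ forces the analogous collapse, and I would then exhibit another endpoint of an edge of $F$ — one not absorbed into these collapsed isometric paths — that distinguishes $x$ from $y$.

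For Part (ii), when $r$ is a cut-vertex of $G$, every component of $G - r$ corresponds to a subtree of $T_F$ hanging off $r$, and each such subtree contains at least one leaf of $T_F$, hence an element of $S \setminus \{r\}$. For any distinct pair $x, y$, I would pick a leaf $\ell_0 \in S \setminus \{r\}$ in a component of $G - r$ disjoint from $\{x, y\}$ (dealing with the case $x = r$ or $y = r$ by picking two leaves in two distinct components); every $\ell_0$-to-$x$ and $\ell_0$-to-$y$ path in $G$ then passes through $r$, giving $\dist{\ell_0}{x} = \dist{\ell_0}{r} + \dist{r}{x}$ and likewise for $y$. Thus $\ell_0$ mimics $r$ up to an additive constant in any doubly-resolving computation, and substituting $\ell_0$ for $r$ in the Part (i) argument shows that $S \setminus \{r\}$ is a doubly resolving set.

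The main obstacle will be the degenerate incomparable subcase of Part (i), where $F$-edges create shortcut isometric paths that collapse the natural distance separations provided by $T_F$. The crux is that such collapses are rigid enough that they must expose another endpoint of $F$ whose distances to $x$ and $y$ break the symmetry, since otherwise the shortcut structure would contradict the minimality of $F$ as a feedback edge set.
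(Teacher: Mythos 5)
First, note that the paper does not prove Theorem~\ref{thm:doubly} at all: it is quoted from Lu, Ye and Zhu~\cite{LuWeightedDoubly}, so the only in-paper material to compare with is the closely analogous argument of Proposition~\ref{prop:edgeresolving}. Measured against what such a proof requires, your proposal has a genuine gap precisely at the point you yourself flag as ``the main obstacle''. Parts (iii) and the easy cases of (i) are fine: leaves of $T_F$ lie in $S$ because $\delta(G)\geq 2$, the ancestor case follows from Lemma~\ref{lem:tree} since subpaths of isometric paths are isometric, and your triangle-inequality computation correctly isolates the degenerate situation $\dist{r}{x}=\dist{r}{y}$ with $x$ on an isometric $r$--$\ell_y$ path (and symmetrically $y$ on an isometric $r$--$\ell_x$ path). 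But that degenerate case is the entire substance of the theorem, and ``I would then exhibit another endpoint of an edge of $F$ not absorbed into these collapsed isometric paths'' is an assertion, not an argument. Moreover the mechanism you invoke --- that otherwise ``the shortcut structure would contradict the minimality of $F$ as a feedback edge set'' --- is not the right one: minimality of $F$ is a \emph{consequence} of goodness (Observation~\ref{obs:cyclomatic}), and an arbitrary minimum feedback edge set would not support the theorem. What actually powers the resolution is the defining property of good edge sets that $|B_r(u)\cap F|=|B_r(u)|-1$ for every $u\neq r$, combined with an extremal choice: one fixes a leaf $w$ of $T_F$ below one of the two vertices, chooses an isometric path from $w$ to the other vertex containing an element $s$ of $S$ as close to the target as possible, shows the two isometric paths from $s$ up to $x$ and up to $y$ are internally disjoint (else the divergence vertex would contradict the choice of $s$), and then observes that the two first edges of these paths both lie in $B_r(s)$, so at least one of their upper endpoints is in $S$, which either resolves the pair or contradicts the extremal choice. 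This is exactly the skeleton of the paper's proof of Proposition~\ref{prop:edgeresolving}; without it, your part (i) is incomplete.

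Two smaller remarks. In part (ii), your reduction ``pick $\ell_0\in S\setminus\{r\}$ in a component of $G-r$ disjoint from $\{x,y\}$'' breaks down when $x$ and $y$ lie in different components and every component of $G-r$ meets $\{x,y\}$ (e.g.\ exactly two components); this case needs a separate argument, and indeed the paper handles (ii) via the general fact (also quoted from~\cite{LuWeightedDoubly}) that deleting a cut-vertex from a doubly resolving set preserves the property, rather than by substituting a mimic of $r$. Finally, your statement $|F|=\cyclo{G}$ for part (iii) is exactly Observation~\ref{obs:cyclomatic}, so that part is correct as written.
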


A doubly resolving set of $G$ is also a resolving set of $G$, and thus $\dim (G) \leq 2 \cyclo{G} + 1$ when $\delta (G) \geq 2$ due to Theorem~\ref{thm:doubly}. Moreover, if $G$ contains a cut-vertex and $\delta (G) \geq 2$, we have $\dim (G) \leq 2 \cyclo{G}$. \ah{Notice that $\delta (G) \geq 2$ implies that $L(G)=0$.} Therefore, Theorem~\ref{thm:doubly} implies that Conjecture~\ref{conj-dimedim} holds for the metric dimension of a graph with $\delta (G) \geq 2$ and at least one cut-vertex. \ah{When $\delta (G) \geq 2$ and the graph has no cut-vertex, we are $+1$ away from the bound in Conjecture~\ref{conj-dimedim}.}

A doubly resolving set is not necessarily an edge resolving set or a mixed resolving set. Thus, more work is required to show that edge and mixed resolving sets can be constructed with good edge sets.
A \emph{layer} of $G$ is a set $L_d=\{v\in V(G) \ | \ \dist{r}{v} = d\}$ where $r$ is the chosen root and $d$ is a fixed distance.

\begin{proposition}\label{prop:edgeresolving}
	Let $G$ be a graph with $\delta (G) \geq 2$, and let $r \in V(G)$. If the set $S$ contains $r$ and the endpoints of a good edge set $F$ with respect to $r$, then $S$ is an edge resolving set.
\end{proposition}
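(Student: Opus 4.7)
The plan is to show, given distinct edges $e_1, e_2$, that some $s\in S$ satisfies $d(s,e_1)\neq d(s,e_2)$, via a case analysis exploiting the BFS structure of $T_F$. I would label the endpoints as $e_i=a_ib_i$ with $d(r,a_i)\le d(r,b_i)$, so that $d(r,e_i)=d(r,a_i)$. First, if $d(r,a_1)\neq d(r,a_2)$, then $r$ distinguishes, so I may assume both distances equal $d$. Second, if one edge (say $e_1$) lies in $F$, then $a_1,b_1\in S$; since $e_1\neq e_2$, at most one of $\{a_1,b_1\}$ is also an endpoint of $e_2$, and the other distinguishes. This reduces the problem to the case where both $e_1,e_2$ are tree edges of $T_F$. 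Since horizontal edges all lie in $F$, both $e_i$ are then vertical, with $a_i\in L_d$ the $T_F$-parent of $b_i\in L_{d+1}$ and $|B_r(b_i)|=1$.

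I would next observe that if $b_1\in S$ (the case $b_2\in S$ being symmetric), then $b_1\notin\{a_2,b_2\}$: layer considerations forbid $b_1=a_2$, and $b_1=b_2$ would force $a_1=a_2$ by uniqueness of the $T_F$-parent, hence $e_1=e_2$. So $d(b_1,e_1)=0<d(b_1,e_2)$ and $b_1$ distinguishes. I may therefore assume $b_1,b_2\notin S$. By the definition of $S$, each $b_i$ then has no horizontal incident edge, a unique parent $a_i$, and every $T_F$-child of $b_i$ has $b_i$ as its unique parent. Because $\delta(G)\ge 2$, each $b_i$ has a $T_F$-child, so the subtree $T_i$ of $T_F$ rooted at $b_i$ contains a leaf $\ell_i$ of $T_F$; any such leaf has $G$-degree at least $2$ and is hence incident to an $F$-edge, so $\ell_i\in S$. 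Using Lemma~\ref{lem:tree}, $d(\ell_1,b_1)=d(r,\ell_1)-(d+1)$ and so $d(\ell_1,e_1)=d(r,\ell_1)-(d+1)$. Layer bounds then give $d(\ell_1,a_2)\ge d(r,\ell_1)-d>d(r,\ell_1)-(d+1)$ and $d(\ell_1,b_2)\ge d(r,\ell_1)-(d+1)$, so $d(\ell_1,e_2)\ge d(r,\ell_1)-(d+1)$; whenever this inequality is strict, $\ell_1$ distinguishes and I am done.

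The main obstacle will be the equality case $d(\ell_1,b_2)=d(r,\ell_1)-(d+1)$, in which there is a shortest $\ell_1$-to-$r$ path $P$ through $b_2$. Being a reverse shortest path from $r$, $P$ is monotone along BFS layers; starting in $T_1$ and reaching $b_2\notin T_1$, it must exit $T_1$ via some edge $xy$ with $x\in T_1$ strictly below $b_1$ (otherwise $P$ would ascend through $b_1\to a_1$ and could not then descend back to $b_2\in L_{d+1}$) and $y\notin T_1$; this forces $xy\in F$, hence $x,y\in S$. I would then try $y$: the sub-path of $P$ gives $d(y,e_2)\le d(y,b_2)=d(r,y)-(d+1)$, while layer bounds yield $d(y,e_1)\ge d(r,y)-(d+1)$, with equality only if there is also a shortest $y$-to-$r$ path through $b_1$. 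In that sub-sub-case I would repeat the boundary-crossing argument along this new shortest path, generating further $F$-edges crossing into $T_1$; since $F$ is finite, the iteration must terminate at an $S$-vertex that distinguishes $e_1$ from $e_2$. Making this termination rigorous (or, equivalently, ruling out simultaneous equality at every relevant $S$-vertex, possibly by also bringing in a symmetric argument with $\ell_2\in T_2$) is the technical heart of the proof; the rest is routine distance bookkeeping with the tree $T_F$.
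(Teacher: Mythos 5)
Your reduction is sound up to the last step, and it mirrors the paper's strategy: eliminate edges of $F$ and horizontal edges, force both unresolved edges to be vertical tree edges with deep endpoints $b_1,b_2\notin S$ in layer $L_{d+1}$, descend to a leaf $\ell_1$ of $T_F$ below $b_1$ (which lies in $S$ because $\delta(G)\ge 2$), and exploit that non-resolution forces $d(\ell_1,b_2)=d(r,\ell_1)-(d+1)$, i.e.\ a layer-monotone shortest $\ell_1$--$r$ path through $b_2$, whose first exit from the subtree $T_1$ must use an edge of $F$. All of this bookkeeping is correct. The genuine gap is exactly where you place it: the sentence ``since $F$ is finite, the iteration must terminate at an $S$-vertex that distinguishes $e_1$ from $e_2$'' is not an argument. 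Finiteness of $F$ alone neither supplies a strictly decreasing measure (the process could a priori revisit the same crossing edges) nor guarantees that when no new crossing is produced the current vertex actually resolves the pair; since you flag this as the unproven ``technical heart,'' the proof is incomplete at its only nontrivial point.

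The step can be closed, in either of two ways. Your iteration does terminate once you track the BFS layer of the successive $S$-vertices: each newly produced $S$-vertex lies strictly later on a layer-monotone shortest path through $b_1$ (respectively $b_2$) than its predecessor, so its layer strictly decreases, and it can never reach layer $d+1$, because the only layer-$(d+1)$ vertex on such a path is $b_1$ or $b_2$, neither of which is in $S$; hence after finitely many alternations the ``equality'' case must fail and the vertex at hand resolves $e_1$ and $e_2$. The paper instead avoids the iteration by an extremal choice: among all isometric paths from the leaf $w$ to the deep endpoint $f_2$ of the other edge, it fixes one containing an element $s$ of $S$ as close to $f_2$ as possible; non-resolution forces $d(s,e_2)=d(s,f_2)$, the isometric paths from $s$ to $e_2$ and to $f_2$ are internally disjoint (otherwise the divergence vertex would contradict minimality), their first edges both lie in $B_r(s)$, so one far endpoint $v_e$ or $v_f$ is in $S$, and each alternative either resolves the two edges or contradicts the minimality of $s$. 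Your crossing-edge analysis is essentially one step of this scheme; the extremal choice (or the decreasing-layer argument) is the missing ingredient that turns it into a proof.
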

\begin{proof}
	Suppose to the contrary that there exist distinct edges $e=e_1e_2$ and $f=f_1f_2$ that are not resolved by $S$. In particular, we have $\dist{r}{e} = \dist{r}{f}$. Due to this, say, $e_1$ and $f_1$ are in the same layer $L_d$, and $e_2$ and $f_2$ are in $L_d \cup L_{d+1}$.
	If $e$ is a horizontal edge with respect to $r$, then $e_1,e_2 \in S$ and $e$ and $f$ are resolved. Thus, neither $e$ nor $f$ is a horizontal edge with respect to $r$ and we have $e_2,f_2 \in L_{d+1}$.
	
	If $e_2 = f_2$, then $e,f \in B_r(e_2)$. Thus, we have $e_2 \in S$ and at least one of $e_1$ and $f_1$ is also in $S$. Now $e$ and $f$ are resolved by $e_1$ or $f_1$. Therefore, we have $e_2 \neq f_2$ and $e_2,f_2 \notin S$.
	
	Let $w \in V(G)$ be a leaf in $T_F$ such that $e_2$ lies on a path between $w$ and $r$ in $T_F$. Since $\delta (G) \geq 2$, the vertex $w$ is an endpoint of some edge in $F$, and thus $w \in S$. Since $e$ and $f$ are not resolved by $S$, we have $\dist{w}{f_2} = \dist{w}{e_2} = d'-d-1$, where $w \in L_{d'}$, due to the path between $w$ and $r$ being isometric (Lemma~\ref{lem:tree}). 
	Let $P_f$ be \ah{an isometric} path $w - f_2$ in $G$, and assume that $P_f$ is such that it contains an element of $S$ as close to $f_2$ as possible. 
	Denote this element of $S$ by $s$. We have $s \in L_i$ for some $d+1 < i \leq d'$ (notice that we may have $s=w$). As the edges $e$ and $f$ are not resolved by $S$, we have $\dist{s}{e}=\dist{s}{f}$, which implies that $\dist{s}{e_2}=\dist{s}{f_2} = i-d-1$. Let $P_e'$ and $P_f'$ be \ah{isometric} paths $s-e_2$ and $s-f_2$, respectively. The paths $P_e'$ and $P_f'$ are internally vertex disjoint, since otherwise the vertex after which the paths diverge is an element of $S$ which contradicts the choice of $P_f$ and $s$. Let $v_e$ and $v_f$ be the vertices adjacent to $s$ in $P_e'$ and $P_f'$, respectively. 
	Now, we have $sv_e,sv_f \in B_r(s)$, and thus $v_e \in S$ (otherwise, $v_f \in S$, which contradicts the choice of $P_f$ and $s$). If $\dist{v_e}{e_2}<\dist{v_e}{f_2}$, then $v_e$ resolves $e$ and $f$, a contradiction. Thus, we have $\dist{v_e}{e_2} \geq \dist{v_e}{f_2}$, but now there exists \ah{an isometric} path $w - f_2$ that contains $v_e$, which is closer to $f_2$ than $s$ is, a contradiction.
\end{proof}

\begin{proposition}\label{prop:mixedresolving}
	Let $G$ be a graph with $\delta (G) \geq 2$, and let $r \in V(G)$. If the set $S$ contains $r$ and the endpoints of a good edge set $F$ with respect to $r$, then $S$ is a mixed resolving set.
\end{proposition}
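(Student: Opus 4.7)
The approach is a proof by contradiction. Because $S$ contains $r$ together with the endpoints of a good edge set $F$, Theorem~\ref{thm:doubly} already gives that $S$ is a resolving set (so vertex-vertex pairs are handled) and Proposition~\ref{prop:edgeresolving} gives that $S$ is an edge resolving set (so edge-edge pairs are handled). So the plan is to prove that $S$ distinguishes every vertex from every edge: assume some pair $(v,e)$ with $v \in V(G)$ and $e = e_1e_2 \in E(G)$ is not resolved by any element of $S$.

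Testing $r$ first gives $\dist{r}{v} = \dist{r}{e}$. Assume $\dist{r}{e_1} \leq \dist{r}{e_2}$ and set $d = \dist{r}{e_1}$, so $v \in L_d$. I would split into the cases ``$e$ horizontal'' and ``$e$ vertical'' with respect to $r$. If $e$ is horizontal, then $e_2 \in L_d$ and both endpoints lie in $S$ (since $F$ contains all horizontal edges); if $v \notin \{e_1,e_2\}$ then $e_1$ witnesses $\dist{e_1}{v} \geq 1 > 0 = \dist{e_1}{e}$, while if $v$ coincides with one endpoint then the other endpoint witnesses. This rules out the horizontal case.

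If $e$ is vertical with $e_2 \in L_{d+1}$, then $v \neq e_2$ (different layers). If $e_2 \in S$, then $\dist{e_2}{e}=0 < \dist{e_2}{v}$ resolves, done. The delicate sub-case is $e_2 \notin S$. I would argue that the good-edge-set conditions combined with $\delta(G) \geq 2$ force $B_r(e_2) = \{e\}$ with $e \notin F$ (so $e \in T_F$), that $e_2$ has no horizontal edges, and that $e_2$ has at least one neighbour $w \in L_{d+2}$ with $e_2w \in T_F$. Following $T_F$ downward from $e_2$ through $w$ reaches a leaf $u$ of $T_F$; since $\delta(G) \geq 2$, $u$ is an endpoint of an edge of $F$, so $u \in S$. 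Writing $k = d_{T_F}(e_2,u)$ and using Lemma~\ref{lem:tree} to ensure the $r$-to-$u$ path in $T_F$ is isometric in $G$, I obtain $\dist{u}{e_2} = k$ and $\dist{u}{e_1} = k+1$, while the triangle inequality $\dist{r}{u} \leq \dist{r}{v} + \dist{u}{v}$ yields $\dist{u}{v} \geq (d+1+k) - d = k+1$. Hence $\dist{u}{e} = k < k+1 \leq \dist{u}{v}$, and $u$ resolves $(v,e)$, a contradiction.

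The main obstacle is the sub-case $e_2 \notin S$: combining the three constraints on $e_2$ (not in $S$, minimum degree at least $2$, and the good-edge-set axioms) is what forces the existence of the descending tree witness $u \in S$, and the lower bound $\dist{u}{v} \geq k+1$ relies on the triangle inequality through $r$ together with the isometry guaranteed by Lemma~\ref{lem:tree}.
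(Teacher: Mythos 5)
Your proposal is correct and follows essentially the same route as the paper's proof: reduce to vertex--edge pairs via Theorem~\ref{thm:doubly} and Proposition~\ref{prop:edgeresolving}, dispose of horizontal edges since both endpoints lie in $S$, and for a vertical edge descend in $T_F$ from $e_2$ to a leaf of $T_F$ (which lies in $S$ because $\delta(G)\geq 2$), then use the isometry of root-to-leaf paths (Lemma~\ref{lem:tree}) and a layer/triangle-inequality comparison to get $\dist{u}{e} < \dist{u}{v}$. The only difference is cosmetic: you treat $e_2 \in S$ and $e_2 \notin S$ separately, whereas the paper's single argument covers both cases at once.
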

\begin{proof}
	The set $S$ resolves all pairs of distinct vertices by Theorem~\ref{thm:doubly} and all pairs of distinct edges by Proposition~\ref{prop:edgeresolving}. Therefore we only need to show that all pairs consisting of a vertex and an edge are resolved.
	
	Suppose to the contrary that $v \in V(G)$ and $e=e_1e_2 \in E(G)$ are not resolved by $S$. In particular, the root $r$ does not resolve $v$ and $e$, and thus $v,e_1 \in L_d$ for some $d \geq 1$. If $e$ is a horizontal edge, then $e_1,e_2 \in S$ and $e$ and $v$ are resolved.
	Thus, assume that $e_2 \in L_{d+1}$. Let $w \in V(G)$ be a leaf in $T_F$ such that $e_2$ lies on a path between $w$ and $r$ in $T_F$. Since $\delta (G) \geq 2$, the vertex $w$ is an endpoint of some edge in $F$, and thus $w \in S$. We have $\dist{w}{e_2} = d'-d-1$, where $w \in L_{d'}$.
	However, now $\dist{w}{v} \geq d'-d > \dist{w}{e_2}$, and $w$ resolves $v$ and $e$, a contradiction.
\end{proof}

As pointed out in~\cite{LuWeightedDoubly}, if $R$ is a doubly resolving set that contains a cut-vertex $v$, then the set $R \setminus \{v\}$ is also a doubly resolving set. The following observation states that the same result holds for mixed resolving sets, and with certain constraints for (edge) resolving sets.

\begin{observation}\label{obs:cut}
	Let $G$ be a connected graph with a cut-vertex $v$.
	\begin{enumerate}[label=(\roman*)]
		\item Let $R \subseteq V(G)$ be such that there are at least two connected components in $G-v$ containing elements of $R$. If $\dist{v}{x} \neq \dist{v}{y}$ for some $x,y \in V(G) \cup E(G)$, then there exists an element $s \in R$, $s \neq v$, such that $\dist{s}{x} \neq \dist{s}{y}$.
		\item If $R \subseteq V(G)$ is a mixed resolving set of $G$, then every connected component of $G-v$ contains at least one element of $R$.
		\item If $R \subseteq V(G)$ is a resolving set or edge resolving set of $G$, then at most one connected component of $G-v$ does not contain any elements of $R$, and that component is isomorphic to $P_n$ for some $n \geq 1$.
	\end{enumerate}
\end{observation}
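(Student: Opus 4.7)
The plan centers on one structural fact about the cut-vertex~$v$: if $s$ lies in a component~$C_s$ of $G-v$ and $z \in V(G) \cup E(G)$ does not touch~$C_s$ (meaning $z$ is a vertex outside~$C_s$, or an edge with no endpoint in~$C_s$), then every shortest $s$-to-$z$ path passes through~$v$, so $\dist{s}{z} = \dist{s}{v} + \dist{v}{z}$; the weaker triangle inequality $\dist{s}{z} \leq \dist{s}{v} + \dist{v}{z}$ always holds (for edges, by taking the min over endpoints). I would first record this cleanly. Part~(i) is then immediate: assuming WLOG $\dist{v}{x} < \dist{v}{y}$, note that $y$ touches at most one component of $G-v$ while $R$ meets at least two, so some $s \in R \setminus \{v\}$ lies in a component untouched by~$y$, and
\[ \dist{s}{y} = \dist{s}{v} + \dist{v}{y} > \dist{s}{v} + \dist{v}{x} \geq \dist{s}{x}. \]

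For part~(ii), suppose a component~$C$ of $G-v$ contains no element of~$R$; pick a neighbor~$u$ of~$v$ in~$C$ and consider the distinct pair $(v,vu)$. Every $s \in R$ satisfies $\dist{s}{v} = \dist{s}{vu}$: if $s = v$ then both distances are~$0$, and if $s$ lies in a component other than~$C$ then the cut-vertex identity yields $\dist{s}{vu} = \min\{\dist{s}{v},\dist{s}{u}\} = \dist{s}{v}$ (the hypothesis $R \cap C = \emptyset$ excludes the remaining case). Hence $R$ is not a mixed resolving set, a contradiction.

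For part~(iii) I would first rule out two $R$-free components $C_1, C_2$ by exhibiting a bad pair---namely $(u_1,u_2)$ in the resolving case, or $(vu_1, vu_2)$ in the edge-resolving case, where $u_i$ is the neighbor of~$v$ in~$C_i$---on which the same cut-vertex identity forces every $s \in R$ to assign equal values. Once at most one $R$-free component~$C$ remains, I would apply the contrapositive of part~(i) within $G[C \cup \{v\}]$: any two distinct vertices (resp.\ edges) in this subgraph must have distinct distances from~$v$, since otherwise the only candidates in~$R$---namely $v$ itself and vertices in other components---cannot distinguish them. A short BFS-based case analysis then shows that $G[C \cup \{v\}]$ must be the path $v,w_1,\dots,w_n$: any extra vertex in a layer, any horizontal edge within a layer, or any layer-skipping edge would immediately duplicate either a vertex-distance or an edge-distance from~$v$, which was forbidden above. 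Hence $C \cong P_n$.

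The main technical care lies in the vertex--edge bookkeeping, since the vertex-to-edge distance is a min over endpoints and the notion of ``touching a component'' has to respect this throughout. The rigidity at the end of part~(iii) is where this matters most: one must separately track duplicate vertex-distances (resolving case) and duplicate edge-distances (edge-resolving case) to force the path structure, though the arguments run in parallel.
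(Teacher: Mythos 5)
Your proof is correct and rests on the same key fact as the paper's argument: distances across the cut-vertex $v$ are additive, i.e.\ $\dist{s}{z}=\dist{s}{v}+\dist{v}{z}$ whenever $z$ does not touch the component of $s$, combined with the triangle inequality for the other element. The differences are only organizational: in (i) you pick the witness $s$ in a component avoiding the element farther from $v$ instead of the paper's case split, and for (iii), which the paper dismisses as ``easy to see'', you supply a full and sound argument (ruling out two $R$-free components via the pairs $(u_1,u_2)$ or $(vu_1,vu_2)$, then the one-vertex-per-BFS-layer rigidity forcing the remaining component to be $P_n$).
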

\begin{proof}
(i) If $s \in R$ is such that the \ah{isometric} paths from $s$ to $x$ and $y$ go through $v$, then we clearly have $\dist{s}{x} = \dist{s}{v} + \dist{v}{x} \neq \dist{s}{v} + \dist{v}{y} = \dist{s}{y}$.

Assume that there are only two connected components $G_1$, $G_2$ of $G-v$ that contain elements of $R$, and that $x \in V(G_1) \cup E(G_1 + v)$ and $y \in V(G_2) \cup E(G_2 + v)$. Let $s_1 \in V(G_1) \cap R$ and $s_2 \in V(G_2) \cap R$. Suppose that $\dist{s_1}{x} = \dist{s_1}{y}$. Now, $\dist{s_1}{v} + \dist{v}{y} = \dist{s_1}{x} \leq \dist{s_1}{v} + \dist{v}{x}$. Since $\dist{v}{x} \neq \dist{v}{y}$, we have $\dist{v}{y} < \dist{v}{x}$. Consequently, $\dist{s_2}{x} = \dist{s_2}{v} + \dist{v}{x} > \dist{s_2}{v} + \dist{v}{y} \geq \dist{s_2}{y}$. Thus, $x$ and $y$ are resolved by $s_2$.

(ii) If a connected component $G'$ does not contain an element of $R$, then $v$ and $vx$, where $x \in V(G')$ are not resolved by $R$, a contradiction.

(iii) Easy to see.
\end{proof}

The following corollary follows from Propositions~\ref{prop:edgeresolving} and~\ref{prop:mixedresolving}, and Observation~\ref{obs:cut}.

\begin{corollary}\label{cor:md:deg2}
	Let $G$ be a graph with $\delta (G) \geq 2$. 
	\begin{enumerate}[label=(\roman*)]
		\item If $G$ contains a cut-vertex, then $\edim (G) \leq 2 \cyclo{G}$ and $\mdim (G) \leq 2 \cyclo{G}$.
		\item If $G$ does not contain a cut-vertex, then $\edim (G) \leq 2 \cyclo{G} + 1$ and $\mdim (G) \leq 2 \cyclo{G} + 1$.
	\end{enumerate}
\end{corollary}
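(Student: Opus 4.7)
The plan is to use the set $S$ of Propositions~\ref{prop:edgeresolving} and~\ref{prop:mixedresolving} as a black box and argue that, depending on whether $G$ has a cut-vertex, either $S$ itself or $S\setminus\{r\}$ witnesses the claimed bound. Here $S$ consists of a root $r$ together with the endpoints of a good edge set $F$ with respect to $r$; by Theorem~\ref{thm:doubly}(iii) we have $|S|\leq 2\cyclo{G}+1$, and by the two propositions $S$ is simultaneously an edge resolving set and a mixed resolving set. Case~(ii) is then immediate: pick any root and the bound $2\cyclo{G}+1$ follows for both parameters.

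For case~(i) I would pick the root $r$ to be a cut-vertex and show that $S':=S\setminus\{r\}$ remains both an edge and a mixed resolving set, giving $|S'|\leq 2\cyclo{G}$. For any pair $x,y$ that $S'$ must distinguish, I split on whether $\dist{r}{x}=\dist{r}{y}$. If the distances are equal then $r$ does not resolve $x,y$, but $S$ does, so the resolver is already in $S'$. If the distances differ, I invoke Observation~\ref{obs:cut}(i) with $v=r$ and $R=S'$; its hypothesis is precisely that $S'$ meets at least two components of $G-r$.

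The key technical step, and the only place where the assumption $\delta(G)\geq 2$ is really used, is verifying this hypothesis, which I would establish in the stronger form that $S'$ intersects every component of $G-r$. For a component $G_i$ of $G-r$, let $H_i=G[V(G_i)\cup\{r\}]$; every $v\in V(G_i)$ satisfies $\deg_{H_i}(v)=\deg_G(v)\geq 2$ and $\deg_{H_i}(r)\geq 1$, so a degree-sum computation yields $|E(H_i)|\geq |V(H_i)|$ and hence $\cyclo{H_i}\geq 1$. Since $G_i$ is a component of $G-r$, the BFS tree $T_F$ restricted to $V(H_i)$ is a spanning tree of $H_i$, so $H_i$ has at least one non-tree edge, which lies in $F$ and has at least one endpoint in $V(G_i)$, contributing a vertex to $S'\cap V(G_i)$. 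As $r$ is a cut-vertex there are at least two such components, so the hypothesis of Observation~\ref{obs:cut}(i) is satisfied and the argument closes. Once this component-covering claim is in hand, the rest is mere assembly of the cited results.
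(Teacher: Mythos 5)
Your proposal is correct and follows essentially the same route as the paper: take $S=\{r\}\cup\{\text{endpoints of }F\}$ with $|S|\le 2\cyclo{G}+1$, use Propositions~\ref{prop:edgeresolving} and~\ref{prop:mixedresolving}, and in the cut-vertex case root at a cut-vertex and apply Observation~\ref{obs:cut}(i) to drop $r$. The only (harmless) deviation is how you verify that $S\setminus\{r\}$ meets at least two components of $G-r$: you do it directly via a degree-sum/spanning-tree count using $\delta(G)\ge 2$, whereas the paper gets it from Observation~\ref{obs:cut}(ii) applied to the mixed resolving set $S$; both verifications are valid.
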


\ah{When $\delta(G) \geq 2$, both $L(G)=0$ and $\ell(G) = 0$. Thus, Corollary~\ref{cor:md:deg2} proves that when $\delta(G) \geq 2$, Conjecture~\ref{conj-dimedim} holds for the edge metric dimension when $G$ contains a cut-vertex and we are +1 away in the case where $G$ does not contain a cut-vertex. Compared to Conjecture~\ref{conj-mindeg2}, we are either +1 or +2 away from the conjectured bound for the edge metric dimension depending on whether $G$ contains a cut-vertex. As for the mixed metric dimension, Corollary~\ref{cor:md:deg2} proves that Conjecture~\ref{conj-mdim} holds when $\delta(G) \geq 2$ and $G$ contains a cut-vertex. If $\delta(G) \geq 2$ and the graph does not contain a cut-vertex, we are +1 away from the conjectured bound.}

We then turn our attention to graphs with $\delta(G) = 1$. We will show that a good edge set can be used to construct a (edge, mixed) resolving set also in this case. Moreover, we show that Conjecture~\ref{conj-mdim} holds, and Conjecture~\ref{conj-dimedim} holds when $\branchres{G} \geq 1$. We also show that $\dim(G)$ and $\edim (G)$ are at most $2 \cyclo{G} + 1$ when $\branchres{G}=0$. We use the following results on trees in our proof.

\begin{proposition}[\cite{KelencMixed}]\label{MMD:prop:trees}
	Let $T$ be a tree, and let $R \subseteq V(T)$ be the set of leaves of $T$. The set $R$ is a mixed metric basis of $T$.
\end{proposition}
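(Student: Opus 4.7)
The plan is to establish two separate claims: (1) that the leaf set $R$ is a mixed resolving set of $T$, and (2) that every mixed resolving set of $T$ contains all leaves, so $R$ is of minimum cardinality.

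For claim (2), I would fix a leaf $v \in R$ with unique neighbor $u$ and exhibit a pair of elements of $V(T) \cup E(T)$ resolved \emph{only} by $v$. The natural candidate is the pair consisting of the vertex $u$ and the edge $uv$. For any vertex $w \neq v$, the unique $wv$-path in $T$ passes through $u$, so $\dist{w}{v} = \dist{w}{u} + 1$; hence $\min\{\dist{w}{u}, \dist{w}{v}\} = \dist{w}{u}$, meaning $w$ fails to distinguish $u$ from $uv$. On the other hand, $\dist{v}{u} = 1$ while the distance from $v$ to the edge $uv$ is $0$, so $v$ does distinguish them. Hence $v$ lies in every mixed resolving set.

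For claim (1), I would show that any two distinct elements $x, y \in V(T) \cup E(T)$ are resolved by some leaf, splitting into three cases. If $x, y$ are distinct vertices, extend the unique $xy$-path beyond $y$ to a leaf $\ell$ of $T$ (or take $\ell = y$ if $y$ itself is a leaf); then $\dist{\ell}{x} = \dist{\ell}{y} + \dist{y}{x} > \dist{\ell}{y}$. If $x, y$ are distinct edges, write $x = ab$ with $b$ chosen as an endpoint of $x$ that does not belong to $y$ (such an endpoint exists because $x \neq y$) and such that $y$ lies entirely in $a$'s component of $T - x$; any leaf $\ell$ in $b$'s component of $T - x$ then satisfies $\dist{\ell}{x} = \dist{\ell}{b}$, while the distance from $\ell$ to each endpoint of $y$ is at least $\dist{\ell}{b} + 1$. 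If $x$ is a vertex and $y = uw$ is an edge, the same strategy applies: when $x \in y$, take a leaf in the component of $T - y$ not containing $x$; when $x \notin y$, extend from $x$ in a direction avoiding the nearer endpoint of $y$, obtaining a leaf $\ell$ with $\dist{\ell}{x} < \dist{\ell}{y}$.

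The main subtlety lies in ensuring across these subcases that the chosen component of $T$ minus an edge is non-empty and contains a leaf, and in handling the degenerate configurations where the relevant vertex is itself already a leaf (in which case that vertex belongs to $R$ and serves directly as $\ell$, since for instance $\dist{x}{x} = 0$ while $\dist{x}{y} \geq 1$ whenever $x \notin y$). Once every subcase is settled, $R$ is a mixed resolving set, and combined with the forced inclusion from claim (2), $R$ is a mixed metric basis of $T$.
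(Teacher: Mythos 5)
The paper does not prove this statement at all: it is imported as a known result, cited from Kelenc et al.\ \cite{KelencMixed}, so there is no in-paper proof to compare against. Your argument is a correct self-contained proof and follows the standard route: minimality via the observation that for a leaf $v$ with support vertex $u$, the pair consisting of the vertex $u$ and the edge $uv$ is distinguished only by $v$ (every $w\neq v$ satisfies $\dist{w}{uv}=\dist{w}{u}$); sufficiency via branch/component arguments, choosing in each case a leaf $\ell$ for which the relevant paths are forced through a separating vertex or edge, giving a strict distance inequality. The subtleties you flag are real but routine and you handle them correctly: a component of $T$ minus an edge (or minus a non-leaf vertex) always contains a leaf of $T$, since either it is a single vertex of degree~$1$ in $T$ or it is a subtree with at least two leaves, at most one of which is the attachment vertex; and in the vertex--edge case with $x\notin y$, both endpoints of $y$ lie in the same component of $T-x$, so avoiding the nearer endpoint's branch (or taking $\ell=x$ when $x$ is itself a leaf) does give $\dist{\ell}{x}<\dist{\ell}{y}$. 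The only degenerate situations left implicit are the trivial trees ($K_1$, $K_2$), where the claim is immediate.
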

\begin{proposition}[\cite{KelencEdge18,Khuller96}]\label{MDED:prop:trees}
	Let $T$ be a tree that is not a path. If $R \subseteq V(T)$ is a branch-resolving set of $T$, then it is a resolving set and an edge resolving set.
\end{proposition}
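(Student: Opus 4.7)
The plan is to prove both conclusions by a midpoint-of-path argument in $T$, and to reduce the edge-resolving claim to the vertex-resolving one via an edge-subdivision of $T$.

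First I would verify that $R\neq\emptyset$. Every leaf of $T$ lies in exactly one leg attached to exactly one branch vertex, so $\sum_{v}\legs{v}=\leaf{T}$ with the sum ranging over branch vertices. A standard handshake argument for trees gives $\leaf{T}\geq b+2$, where $b$ is the number of branch vertices; if every branch vertex had $\legs{v}\leq 1$, then $\leaf{T}\leq b$, a contradiction. Hence some branch vertex $v$ satisfies $\legs{v}\geq 2$, and the branch-resolving condition forces $R$ to contain at least one vertex on a leg of $v$.

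For the resolving-set claim, fix distinct $u,v\in V(T)$ and let $P$ be the $uv$-path. For any $s\in V(T)$, let $\pi_{P}(s)$ denote the vertex of $P$ closest to $s$; then $\dist{s}{u}=\dist{s}{\pi_{P}(s)}+\dist{\pi_{P}(s)}{u}$ and analogously for $v$, so $s$ fails to resolve $u,v$ iff $\pi_{P}(s)$ is the midpoint of $P$. When $|P|$ is odd no midpoint vertex exists, so any $s\in R$ resolves; so assume $|P|$ is even with midpoint $w$, and let $C_u,C_v$ be the components of $T\setminus\{w\}$ containing $u,v$ respectively. Then $s$ fails iff $s\notin C_u\cup C_v$. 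Suppose for contradiction that $R\cap(C_u\cup C_v)=\emptyset$. I descend from $u$: if $u$ is a leaf of $T$ take its leg, otherwise walk from $u$ through $C_u$ to some leaf $\ell_u$ of $T$ in $C_u$ and take its leg; do the same from $v$, obtaining a leg of a leaf $\ell_v\in C_v$. Neither leg meets $R$ except possibly at $w$ itself (which can occur only if $w$ has degree $2$ in $T$ and lies inside one of the legs). A case analysis on the two branch vertices $b_u,b_v$ to which these legs are attached---whether they coincide (which then forces $b_u=b_v=w$ with two legs of $w$ uncovered by $R$) or differ (in which case one of them has $\legs\geq 2$ with at least two of its legs missed by $R$, possibly after chasing the argument through a further branch vertex whose constraints are themselves violated)---contradicts the branch-resolving hypothesis.

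For the edge-resolving claim, I would subdivide every edge $e$ of $T$ by inserting a new degree-$2$ vertex $m_e$, producing a tree $T'$. For any $s\in V(T)\subseteq V(T')$ and any edge $e$, the distance in $T'$ from $s$ to $m_e$ equals $2\dist{s}{e}+1$, so $R$ resolves pairs of edges of $T$ iff it resolves the corresponding pairs $\{m_e,m_f\}$ in $T'$. Branch vertices and legs of $T'$ are in natural bijection with those of $T$ (each leg merely lengthened by its subdivision vertices), so $R$ remains a branch-resolving set of $T'$, and the vertex argument just developed, applied inside $T'$ to pairs $\{m_e,m_f\}$, yields the edge-resolution.

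The main obstacle is the descent step in the second paragraph: the case analysis on $b_u,b_v$ must be organised so that the information from $u$'s and $v$'s sides always combines into a single branch vertex with two $R$-free legs. The cleanest tactic I foresee is to pick $\ell_u,\ell_v$ as close to $w$ as possible along their respective legs, which in the coincident case forces $b_u=b_v=w$ (so $\legs{w}\geq 2$ with both relevant legs uncovered) and in the non-coincident case exposes a branch vertex strictly inside $C_u$ or $C_v$, or one reached by chasing the argument through non-leg neighbours, that carries two uncovered legs.
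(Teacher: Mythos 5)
The paper itself does not prove this proposition (it quotes it from the cited references), so I assess your argument on its own terms. Your reduction of the edge-resolving claim to the vertex claim via subdividing every edge is valid (the identity $d_{T'}(s,m_e)=2\dist{s}{e}+1$ and the preservation of branch vertices and legs both check out), and your gate/midpoint step is correct: a pair $u,v$ unresolved by $R$ forces $R\cap(C_u\cup C_v)=\emptyset$, with $w$ itself possibly in $R$. The genuine gap is exactly the step you flag: the derivation of a contradiction from this. Two of the specific claims in your sketch do not hold as stated. First, $b_u=b_v$ does not force $b_u=b_v=w$: when $\deg_T(w)=2$, the leg of $\ell_u$ can pass through $w$ into $C_v$ and attach to a branch vertex $x\in C_v$, and $\ell_v$'s leg may attach to that same $x$. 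Second, in precisely that situation $w$ may lie in $R$ (nothing excludes it, since $w$ also fails to resolve $u,v$), so one of your two legs is covered and no violation is visible at $b_u=b_v$; the ``chasing'' that is then supposed to finish the proof is the entire content of the argument and is not specified, so the proof as written does not close.

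The fix is short and replaces the case analysis on $(b_u,b_v)$ altogether. If $C_u$ (or $C_v$) contains a vertex of degree at least $3$, choose such a vertex $x$ farthest from $w$; every component of $T-x$ not containing $w$ lies inside $C_u$ and contains no vertex of degree at least $3$, hence is a leg attached to $x$. There are $\deg_T(x)-1\geq 2$ such legs, all contained in $C_u$ (in particular not containing $w$), hence all $R$-free, so $x$ has at least two uncovered legs, contradicting the branch-resolving property regardless of whether $w\in R$. If neither $C_u$ nor $C_v$ contains a vertex of degree at least $3$, then either $\deg_T(w)\geq 3$, in which case $C_u$ and $C_v$ are themselves two $R$-free legs attached to $w$, again a contradiction, or $\deg_T(w)=2$, in which case $T=C_u\cup\{w\}\cup C_v$ is a path, contradicting the hypothesis. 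With this replacement (applied also inside $T'$ for the edge case), your overall strategy goes through.
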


\begin{theorem}\label{thm:mindeg1}
	Let $G$ be a connected graph that is not a tree such that $\delta (G) = 1$. 
	Let $r \in V(G_b)$, and let $S \subseteq V(G_b)$ contain $r$ and the endpoints of a good edge set $F \subseteq E(G_b)$ with respect to $r$.
	If $R$ is a \ah{branch-resolving} set of $G$, then the set $R \cup S$ is a resolving set and an edge resolving set of $G$. If $R$ is the set of leaves of $G$, then the set $R \cup S$ is a mixed resolving set of $G$.
\end{theorem}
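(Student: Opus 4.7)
The plan is to exploit the decomposition of $G$ into its base graph $G_b$ (which satisfies $\delta(G_b)\geq 2$) together with the pendant trees hanging off $G_b$. A standard maximality argument shows that $V(G_b)$ is the largest vertex subset of $V(G)$ inducing a subgraph of minimum degree $\geq 2$; consequently each connected component $F_j$ of $G[V(G)\setminus V(G_b)]$ is a tree attached to $G_b$ by a single edge at a single vertex $a_j\in V(G_b)$ (any other attachment pattern would enlarge $G_b$). Setting $T_j:=F_j\cup\{a_j\}$ together with the connecting edge, each $T_j$ is a tree in which $a_j$ is both a leaf of $T_j$ and a cut vertex of $G$. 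The two key distance identities I will use repeatedly are $d_G(s,x)=d_{G_b}(s,x)$ for $s,x\in V(G_b)$, and $d_G(s,x)=d_G(s,a_j)+d_{T_j}(a_j,x)$ whenever $s\notin V(T_j)$ and $x\in V(T_j)\cup E(T_j)$.

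To show that $R\cup S$ resolves all required pairs $x,y$, I would split into three cases by location. If both $x,y$ lie in $G_b$, applying Theorem~\ref{thm:doubly} (resp.\ Proposition~\ref{prop:edgeresolving} or~\ref{prop:mixedresolving}) to $G_b$ yields some $s\in S$ resolving them in $G_b$, and hence in $G$ by the first identity. If both $x,y$ lie in the same $T_j$, then any $s\in R\cup S$ outside $V(T_j)$ satisfies $d_G(s,x)-d_G(s,y)=d_{T_j}(a_j,x)-d_{T_j}(a_j,y)$, so it plays the role of $a_j$ for resolving $x,y$ inside $T_j$; it therefore suffices to verify that $R':=(R\cap V(T_j))\cup\{a_j\}$ resolves $x,y$ in the tree $T_j$ (with $T_j$-distances, which agree with $G$-distances here). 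For the (edge) resolving assertions (with $R$ branch-resolving), I would check that $R'$ is a branch-resolving set of $T_j$: every branch vertex $v$ of $T_j$ lies in $V(F_j)$ and satisfies $\deg_G(v)=\deg_{T_j}(v)$, and its legs in $T_j$ match its legs in $G$ except possibly for one additional leg toward $a_j$ (present precisely when the path from $v$ to $a_j$ has only degree-$2$ internal vertices, and not a leg of $G$ since $a_j$ has degree $\geq 2$ there), which is covered by $a_j\in R'$; Proposition~\ref{MDED:prop:trees} then concludes, with the subcase where $T_j$ is a path handled directly from $a_j$'s distances. For the mixed assertion (with $R$ the leaf set of $G$), the leaves of $T_j$ are exactly $a_j$ and the leaves of $G$ in $V(F_j)$, all of which belong to $R'$, so Proposition~\ref{MMD:prop:trees} applies.

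For the remaining case where $x$ and $y$ lie in different parts of the decomposition, I would write $d_G(s,x)-d_G(s,y)$ for $s\in V(G_b)$ in the form $d_{G_b}(s,a_j)-d_{G_b}(s,a_{j'})+(t-t')$ (with the convention $a_{j'}=y$ and $t'=0$ when $y\in V(G_b)$, and $t=d_{T_j}(a_j,x)$, $t'=d_{T_{j'}}(a_{j'},y)$ otherwise), and use that $S$ is doubly resolving in $G_b$ (Theorem~\ref{thm:doubly}) to find some $s\in S$ with $d_{G_b}(s,a_j)-d_{G_b}(s,a_{j'})\neq t'-t$. The only remaining scenario is two pendant trees attached at a common vertex, $a_j=a_{j'}$, with $t=t'$: there no element of $V(G_b)$ can separate $x$ from $y$, but any $s\in V(F_j)\cap(R\cup S)$ does, since $a_j$ being a leaf of $T_j$ forces the $s$-to-$x$ path in $T_j$ to avoid $a_j$ while the $s$-to-$y$ path must cross $a_j$, yielding the strict inequality $d_G(s,x)<d_G(s,y)$. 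The existence of such an $s$ follows from the branch-resolving condition — either $F_j$ is a pendant path in $G$ (giving $a_j$ at least two legs and hence an $R$-element in one of them), or $F_j$ contains an internal branch vertex with at least two legs — or from $R$ containing all leaves in the mixed case.

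The main obstacle will be the branch-resolving bookkeeping in Case~B: tracking precisely the single potentially extra leg of $T_j$ toward $a_j$ and verifying that adjoining $a_j$ to the trace of $R$ in $T_j$ exactly compensates for it. A secondary obstacle is the degenerate same-attachment subcase in Case~C, where an internal resolver must be secured inside a pendant tree; this relies on the structural fact that any tree containing a branch vertex admits a branch vertex with at least two legs.
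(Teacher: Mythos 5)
Your overall strategy is the same as the paper's (split pairs according to the base graph $G_b$ and the pendant trees, use Theorem~\ref{thm:doubly} and Propositions~\ref{prop:edgeresolving}--\ref{prop:mixedresolving} inside $G_b$, the tree results inside a pendant tree, and the doubly resolving property of $S$ across different attachment vertices), and the parts you do treat are sound -- indeed your branch-resolving bookkeeping for $R'$ in a pendant tree, and your direct distance inequality for two pendant trees sharing an attachment vertex, make explicit what the paper leaves implicit. However, there is a genuine gap: your case analysis never covers the pairs in which one element is an \emph{edge of $G_b$} and the other is a vertex or edge lying strictly inside a pendant tree. These pairs are needed for both the edge resolving and the mixed resolving claims, and they do not fit your Case~C template, because for an edge $y=y_1y_2\in E(G_b)$ the quantity $\dist{s}{y}=\min\{\dist{s}{y_1},\dist{s}{y_2}\}$ is not of the form $\dist{s}{a_{j'}}-t'$ for a vertex $a_{j'}$ fixed independently of $s$, so the doubly resolving property of $S$ cannot be invoked as you do.

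This case is not a formality. Take $x$ to be the pendant edge $ab$ at an attachment vertex $a$ and $y=ac$ an edge of $G_b$ incident to $a$; then for every $s\in V(G_b)$ one has $\dist{s}{x}=\dist{s}{a}$ and $\dist{s}{y}=\min\{\dist{s}{a},\dist{s}{c}\}$, so the differences $\dist{s}{x}-\dist{s}{y}$ all lie in $\{0,1\}$ and double resolution of the pair $(a,c)$ by $S$ does not by itself yield a resolver; moreover, when $R$ is merely branch-resolving it may contain no vertex of that pendant tree at all (e.g.\ a single leg at $a$, with $\branchres{G}=0$), so you cannot fall back on an element of $R$ inside the tree, as you do in your degenerate same-attachment subcase (that fallback is only available in the mixed setting, where $R$ contains all leaves). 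The paper resolves exactly this situation in the fourth bullet of its proof by exploiting the structure of the good edge set: assuming $r$ does not resolve the pair, either $y$ is horizontal (so $y_1,y_2\in S$) or one takes a leaf $z$ of $T_F$ whose root-to-$z$ path passes through the lower endpoint $y_2$; since $\delta(G_b)\ge 2$ this $z$ is an endpoint of an edge of $F$, hence in $S$, and the isometry of root-to-leaf paths in $T_F$ (Lemma~\ref{lem:tree}) gives $\dist{z}{x}\ge 2d_x+1+\dist{z}{y_2}>\dist{z}{y}$. You need to add an argument of this kind (or some other use of the BFS/layer structure of $F$) to close this case; without it the edge and mixed resolving assertions of Theorem~\ref{thm:mindeg1} are not proved.
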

\begin{proof}
	Let $R$ be either a branch-resolving set of $G$ (for the regular and edge resolving sets) or the set of leaves of $G$ (for mixed metric dimension).
	We will show that the set $R \cup S$ is a (edge, mixed) resolving set of $G$.
	
	The graph $G-E(G_b)$ is a forest (note that some of the trees might be isolated vertices) where each tree contains a unique vertex of $G_b$. Let us denote these trees by $T_v$, where \ah{$v\in V(G_b) \cap V(T_v)$}.
	
	Consider distinct $x,y \in V(G) \cup E(G)$. We will show that $x$ and $y$ are resolved by $R \cup S$. 
	
	\begin{itemize}
		\item Assume that $x,y \in V(T_v) \cup E(T_v)$ for some $v \in V(G_b)$. Denote $R_v = (V(T_v) \cap R) \cup \{v\}$. The set $R_v$ is a (edge, mixed) resolving set of $T_v$ by Propositions~\ref{MDED:prop:trees} and~\ref{MMD:prop:trees}. If $x$ and $y$ are resolved by some element in $R_v$ that is not $v$, then we are done. If $x$ and $y$ are resolved by $v$, then they are resolved by any element in $S \setminus \{v\}$. Since $G$ is not a tree, the set $S \setminus \{v\}$ is clearly nonempty, and $x$ and $y$ are resolved in $G$.
	
		\item Assume that $x,y \in V(G_b) \cup E(G_b)$. Now $x$ and $y$ are resolved by $S$ due to Theorem~\ref{thm:doubly}, Proposition~\ref{prop:edgeresolving} or Proposition~\ref{prop:mixedresolving}.
	
		\item Assume that $x\in V(T_v) \cup E(T_v)$ and $y \in V(T_w) \cup E(T_w)$ where $v,w \in V(G_b)$, $v \neq w$. The set $S$ is a doubly resolving set of $G_b$ according to Theorem~\ref{thm:doubly}. Thus, there exist distinct $s,t \in S$ such that $\dist{s}{v}-\dist{s}{w} \neq \dist{t}{v}-\dist{t}{w}$. Suppose to the contrary that $\dist{s}{x}=\dist{s}{y}$ and $\dist{t}{x}=\dist{t}{y}$. Now we have
		\[ 	\dist{w}{y} - \dist{v}{x} = \dist{s}{v} - \dist{s}{w} \neq \dist{t}{v} - \dist{t}{w} = \dist{w}{y} - \dist{v}{x}, \]
		a contradiction. Thus, $s$ or $t$ resolves $x$ and $y$.

		\item Assume that $x\in V(T_v) \cup E(T_v)$ for some $v\in V(G_b)$, $v \neq x$, and $y = y_1y_2 \in E(G_b)$. Suppose that $\dist{r}{x} = \dist{r}{y}$. Without loss of generality, we may assume that $\dist{r}{y} = \dist{r}{y_1} = d$. Now $y_1 \in L_d$ and $v \in L_{d-d_x}$, where $d_x = \dist{v}{x} \in \{0, \ldots , d \}$. 
		If $y_2 \in L_d$, then $y$ is a horizontal edge and $y_1,y_2 \in S$. Now $x$ and $y$ are resolved by $y_1$ or $y_2$. So assume that $y_2 \in L_{d+1}$.
		Let $z \in V(G_b)$ be a leaf in $T_F$ such that $y_2$ lies on a path from $r$ to $z$ in $T_F$. Since $\delta (G_b) \geq 2$, the vertex $z$ is an endpoint of some edge in $F$, and thus $z \in S$. Now $z \in L_{d'}$ for some $d' > d+1$ and $\dist{z}{y_2} = d'-d-1$ by Lemma~\ref{lem:tree}.
		Consequently,
		\[ \dist{z}{x} = \dist{z}{v} + d_x \geq d' - (d-d_x) +d_x  = 2d_x + 1 + \dist{z}{y_2} > \dist{z}{y}.  \]
	\end{itemize}
\end{proof}

Since the root $r$ can be chosen freely, we can choose the root to be a cut-vertex in $G$ whenever $G$ contains cut-vertices. The bounds in the next corollary then follow from Observations~\ref{obs:cyclomatic} and~\ref{obs:cut}, and Theorem~\ref{thm:mindeg1}.

\begin{corollary}\label{cor:md:deg1}
	Let $G$ be a connected graph that is not a tree such that $\delta (G) = 1$. 
	We have $\dim (G) \leq \lambda(G) + 2 \cyclo{G}$, $\edim (G) \leq \lambda(G) + 2 \cyclo{G}$, and $\mdim (G) \leq \leaf{G} + 2 \cyclo{G}$,
\[ \dim (G) \leq \lambda(G) + 2 \cyclo{G}, \qquad \edim (G) \leq \lambda(G) + 2 \cyclo{G}, \qquad \mdim (G) \leq \leaf{G} + 2 \cyclo{G}, \]
	where $\lambda(G) = \max \{\branchres{G}, 1\}$.
\end{corollary}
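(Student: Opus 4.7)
My plan is to derive all three bounds by applying Theorem~\ref{thm:mindeg1} to the base graph $G_b$ with a carefully chosen root $r$, and then shaving off one element from the resulting set via the cut-vertex argument of Observation~\ref{obs:cut}(i) whenever necessary. First I would note that $\cyclo{G_b} = \cyclo{G}$, since iteratively removing degree-1 vertices does not alter the cyclomatic number. Combined with Observation~\ref{obs:cyclomatic} applied inside $G_b$, this yields $|S| \leq 2\cyclo{G} + 1$ for the set $S$ built in Theorem~\ref{thm:mindeg1}. Taking $R$ to be a minimum branch-resolving set (for the regular and edge variants) or the full set of leaves (for the mixed variant) then gives a (edge, mixed) resolving set $R \cup S$ of size at most $|R| + 2\cyclo{G} + 1$. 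This matches the announced bound up to an additive $+1$; the remainder of the argument is devoted to eliminating this slack.

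To shave the $+1$, I would pick the root $r$ to be a cut-vertex of $G$ lying in $V(G_b)$, and then invoke Observation~\ref{obs:cut}(i) to remove $r$ from $R \cup S$. For the mixed metric dimension, since $\delta(G) = 1$ and $G$ is not a tree, some vertex of $G_b$ is the attachment point of a nontrivial pendant tree of $G$, and such a vertex is necessarily a cut-vertex of $G$; I would take $r$ to be such a vertex. A leaf of $R$ then sits in the pendant-tree component of $G - r$, while any endpoint of the good edge set $F$ distinct from $r$ — and there is at least one, since $\cyclo{G} \geq 1$ and $F$ is loopless — sits on the $G_b$-side of $G - r$. Hence $(R \cup S) \setminus \{r\}$ meets at least two components of $G - r$, and Observation~\ref{obs:cut}(i) shows that any pair separated by $r$ is also separated by some other element, so $(R \cup S) \setminus \{r\}$ remains a mixed resolving set.

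For $\dim(G)$ and $\edim(G)$ I would split on the value of $\branchres{G}$. If $\branchres{G} \geq 1$, I pick $r \in V(G_b)$ with $\legs{r} \geq 2$ (guaranteed by the definition of $\branchres{G}$) and choose $R$ to contain one leaf inside each of $\legs{r} - 1$ legs of $r$. Since any such $r$ satisfies $\deg_G(r) \geq \legs{r} + \deg_{G_b}(r) \geq 4$, it cannot lie in any leg, hence $r \notin R$. A leaf of $R$ in a leg of $r$ lies in a leg-component of $G - r$, while an endpoint of $F$ distinct from $r$ lies in the $G_b$-side, so Observation~\ref{obs:cut}(i) applies exactly as in the mixed case. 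If instead $\branchres{G} = 0$, then every vertex has at most one leg, so the empty set is a branch-resolving set, and Theorem~\ref{thm:mindeg1} already yields a (edge) resolving set of size at most $2\cyclo{G} + 1 = \lambda(G) + 2\cyclo{G}$, with no shaving needed.

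The main obstacle I expect is verifying, cleanly and uniformly across the three variants, the hypothesis of Observation~\ref{obs:cut}(i) — namely that $(R \cup S) \setminus \{r\}$ meets at least two components of $G - r$. This reduces to arguing that $R$ populates the pendant-tree side of the cut-vertex $r$ while at least one endpoint of $F$ populates the cyclic side, and to checking that $r$ does not accidentally belong to $R$. Both points are routine once the right $r$ is chosen, but pinning down the right $r$ in each case is the crux of the argument.
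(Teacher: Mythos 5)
Your overall strategy --- apply Theorem~\ref{thm:mindeg1} to $G_b$ with $\cyclo{G_b}=\cyclo{G}$, then delete the root via Observation~\ref{obs:cut}(i) after choosing it to be a suitable cut-vertex --- is exactly the paper's argument, and your treatment of the mixed case and of the case $\branchres{G}=0$ is correct. However, in the case $\branchres{G}\geq 1$ there is a genuine gap: you ``pick $r\in V(G_b)$ with $\legs{r}\geq 2$ (guaranteed by the definition of $\branchres{G}$)'', but the definition only guarantees such a vertex somewhere in $G$, not in the base graph. A vertex with two legs can perfectly well disappear when forming $G_b$: take a cycle, attach a pendant path to one of its vertices, and let the far end $x$ of that path carry two legs; then $\deg(x)=3$ and $\legs{x}=2$, yet $x\notin V(G_b)$ (the base graph is just the cycle). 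Your inequality $\deg_G(r)\geq \legs{r}+\deg_{G_b}(r)\geq 4$ silently assumes $r\in V(G_b)$, so it cannot be used to justify that assumption; and if $r\notin V(G_b)$, Theorem~\ref{thm:mindeg1} cannot be invoked with root $r$ at all, so your shaving step collapses precisely on the graphs where it is needed.

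The gap is repairable without changing the spirit of your argument: when $\branchres{G}\geq 1$, every element of your chosen $R$ lies in a leg, hence in a pendant tree $T_w$ attached at some $w\in V(G_b)$, and this $w$ is a cut-vertex of $G$ (the pendant tree is nontrivial, and $G_b-w$ is nonempty since $G_b$ contains a cycle). Root at $r=w$ instead. The component of $G-w$ containing that element of $R$ contains no vertex of $G_b$, whereas some endpoint of $F$ other than $w$ exists (as $\cyclo{G}\geq 1$) and is a $G_b$-vertex, so $R\cup S$ meets at least two components of $G-w$ and Observation~\ref{obs:cut}(i) applies as you intended. Note also that insisting on $r\notin R$ is unnecessary: even if $r\in R$, the set $(R\cup S)\setminus\{r\}$ has size at most $\branchres{G}+2\cyclo{G}$ and remains a (edge) resolving set by the same observation.
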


The relationship of metric dimension and edge metric dimension has garnered a lot of attention since the edge metric dimension was introduced. Zubrilina~\cite{ZubrilinaEdge} showed that the ratio $\frac{\edim(G)}{\dim(G)}$ cannot be bounded from above by a constant, and Knor et al.~\cite{KnorEmdMd} showed the same for the ratio $\frac{\dim(G)}{\edim(G)}$. Inspired by this, Sedlar and \v{S}krekovski \cite{SedlarEdgedisjoint21} conjectured that for a graph $G \neq K_2$, we have $|\dim(G) - \edim(G)| \leq \cyclo{G}$. This bound, if true, is tight due to the construction presented in~\cite{KnorEmdMd}. It is easy to see that $\dim (G) \geq \lambda(G)$ and $\edim (G) \geq \lambda(G)$ (the fact that $\dim (G) \geq \branchres{G}$ is shown explicitly in~\cite{Chartrand00}, for example). Thus, we now obtain the bound $|\dim (G) - \edim (G)| \leq 2 \cyclo{G}$ due to the bounds established in Corollaries~\ref{cor:md:deg2} and~\ref{cor:md:deg1}.

\section{Geodetic sets and variants}\label{sec:geod}

We now address the problems related to geodetic sets, and show that the same method can be applied in this context as well. Note that all leaves of a graph belong to any of its geodetic sets. 

We first study geodetic sets explicitly, as this is the most well-studied notion in this area. We then address monitoring edge-geodetic sets and distance edge-monitoring sets, because these two notions have been studied in the literature in \ah{relation to the cyclomatic number and the number of leaves}. Bounds for other related parameters in Figure~\ref{fig:diagram} follow from the result on monitoring edge-geodetic sets.

% It follows from Theorem~\ref{thm:meg} that $\geod{G} \leq 3\cyclo{G}+\leaf{G}+1$. In this section, we strengthen this upper bound as follows. 
 
\subsection{Geodetic sets}
 
% \todo[inline]{AH: If a graph has a leaf, then it also has a cut-vertex. So maybe we could formulate Theorem 6 as something like: Let G be a connected graph. If G has a cut-vertex, then g(G) <= 2c(G) + l(G). Otherwise, g(G) <= 2c(G) + 1.

% The first bound is tight for all values of c(G) and l(G) as already demonstrated by the example. The latter bound is also tight at least for c(G)=1 (e.g. odd cycle).}
 
 Let $\geod{G}$ denote the size of a smallest geodetic set of a graph $G$.
 
 \begin{theorem}\label{thm:geod}
 Let $G$ be a connected graph. If $G$ has a cut-vertex then $\geod{G} \leq 2\cyclo{G}+\leaf{G}$. Otherwise, $\geod{G} \leq 2\cyclo{G}+1$.
 \end{theorem}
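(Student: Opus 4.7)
The plan is to mimic the construction of Lu et al.\ (Theorem~\ref{thm:doubly}), but tuned to the definition of a geodetic set rather than a doubly resolving set. First I would choose a root $r\in V(G)$: take $r$ to be a cut-vertex if $G$ has one, and otherwise take $r$ arbitrary. Apply Lemma~\ref{lem:BFS} to compute a good edge set $F$ with respect to $r$; by Lemma~\ref{lem:tree}, $T_F$ is a spanning tree of $G$ rooted at $r$ whose root-to-leaf paths are isometric in $G$, and by Observation~\ref{obs:cyclomatic} we have $|F|=\cyclo{G}$. I would then let
\[ S \;=\; \{\text{endpoints of edges of }F\} \;\cup\; \{\text{leaves of }G\} \;\cup\; (\{r\}\text{ if }r\text{ is not a cut-vertex}). \]

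The central observation is that every leaf of $T_F$ lies in $S$: such a vertex $u$ is either a leaf of $G$ (hence in $S$ by construction), or it has degree at least two in $G$, in which case all but one of its incident edges belong to $F$, so $u$ is an endpoint of some edge of $F$. Given an arbitrary vertex $v\in V(G)$, extend the $r$-to-$v$ path in $T_F$ downward to any leaf $w$ of $T_F$ that is a descendant of $v$; by Lemma~\ref{lem:tree}, the resulting $r$-$w$ path is isometric in $G$, so $v$ lies on an isometric path between two elements of $S$ as soon as $r\in S$. This settles the no-cut-vertex case, and in that case $G$ has no leaf (as a graph on at least three vertices with no cut-vertex is 2-connected), giving $|S|\leq 2\cyclo{G}+1$.

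The main obstacle, as usual with this technique, is saving the additive $+1$ when $r$ is a cut-vertex, because $r$ is no longer in $S$. To handle this I would argue as follows: since $r$ is a cut-vertex, each component $C$ of $G-r$ corresponds to a subtree of $T_F\setminus\{r\}$, and each such subtree contributes at least one leaf of $T_F$ to $S$. For any $v\in V(G)$, pick $w$ to be a leaf of $T_F$ that is either a descendant of $v$ or, in the case $v=r$, any leaf of $T_F$, and pick $w'$ to be a leaf of $T_F$ lying in a different component of $G-r$ than $w$. Because $r$ is a cut-vertex, every $w$-$w'$ walk in $G$ passes through $r$, so the concatenation of the two isometric paths $r$-$w$ and $r$-$w'$ (given by $T_F$) is itself isometric in $G$, and $v$ lies on it. Thus $S$ remains a geodetic set without $r$, and we obtain $|S|\leq 2|F|+\leaf{G}=2\cyclo{G}+\leaf{G}$.

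Finally I would collect the size bounds: the endpoints of $F$ contribute at most $2\cyclo{G}$ vertices, the leaves of $G$ contribute $\leaf{G}$ more, and the extra $\{r\}$ (only in the no-cut-vertex case, where $\leaf{G}=0$) contributes a $+1$, yielding the two bounds claimed in the statement. The only delicate step is the cut-vertex omission described above; the rest is bookkeeping on $T_F$ and a direct application of Lemma~\ref{lem:tree}.
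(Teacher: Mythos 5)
Your proposal is correct and follows essentially the same route as the paper's proof: the same choice of root (a cut-vertex when one exists), the same set $S$ built from the endpoints of a good edge set together with the leaves of $G$ (plus $r$ in the 2-connected case), the observation that every leaf of $T_F$ lands in $S$, and the same key step of concatenating two isometric root-to-leaf paths of $T_F$ through the cut-vertex $r$ to cover any vertex without putting $r$ itself in $S$. Your explicit handling of $v=r$ and of why each component of $G-r$ supplies a leaf of $T_F$ is a small but welcome extra precision over the paper's wording.
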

%\begin{toappendix}
 \begin{proof}%[Proof of Theorem~\ref{thm:geod}]
 Let $F$ be a set of good edges of $G$ with respect to a vertex $r$ obtained by Lemma~\ref{lem:BFS}. When $G$ has a cut-vertex then $r$ shall be a cut-vertex. Let $S$ be the union of the set of leaf vertices of $G$ and the endpoints of the edges in $F$. If $G$ is biconnected then include $r$ in $S$ also. Clearly, $|S| \leq 2\cyclo{G}+1+\leaf{G}$ when $G$ is biconnected and $|S| \leq 2\cyclo{G}+\leaf{G}$, otherwise. Due to Lemma~\ref{lem:tree}, $T_F$ is a tree and let $L$ be the set of leaf vertices of $T_F$. 
 
 If $G$ is biconnected, due to Lemma~\ref{lem:tree}, $L\cup \{r\}$ is a geodetic set of $G$, and clearly $(L \cup \{r\}) \subseteq S$. Therefore $S$ is a geodetic set of $G$. 
 
 Otherwise, $r$ is a cut-vertex and $L\subseteq S$. Consider any vertex $u$ of $G$. Let $C_u$ denote the connected component of $G-\{r\}$ containing $u$. Due to Lemma~\ref{lem:tree}, $u$ lies in an isometric path between $r$ and a vertex $u'\in V(C_u) \cap L$. Since $r$ is a cut-vertex there exists a connected component $C'$ different from $C_u$ and $L\cap V(C')$ is non-empty. Let $u''$ be a vertex of $L\cap V(C')$. Clearly $\dist{u'}{u''} = \dist{r}{u'} + \dist{r}{u''}$. Let $P'$ be the path in $T_F$ between $r$ and $u'$ and $P''$ be the path in $T_F$ between $r$ and $u''$. We have that $P=P'\cup P''$ is an isometric path in $G$ and contains $u$. Hence, the set $L$ is a geodetic set of $G$, $S$ is a geodetic set of $G$. 
 \end{proof}
%\end{toappendix}
 
The upper bound of Theorem~\ref{thm:geod} is tight when there is a cut-vertex, indeed, consider the graph formed by a disjoint union of $k$ odd cycles and $l$ paths, all identified via a single vertex. The obtained graph has cyclomatic number $k$, $l$ leaves, and geodetic number $2k+l$. Observe that any odd cycle has geodetic number~3 and cyclomatic number~1, so the bound \ah{$\geod{G} \leq 2\cyclo{G}+1$ is tight when there is no cut-vertex in the graph}. \ff{However, we do not know the best possible bound for 2-connected graphs with arbitrarily large values of $c(G)$. The graph $G_k$ obtained from $K_{2,k}$ by adding an edge between the two vertices of degree~$k$ is 2-connected, has cyclomatic number $k$ and geodetic number $k$. We do not know a family of 2-connected graphs of arbitrarily large cyclomatic number whose geodetic number is closer to twice the cyclomatic  number and leave this as an open problem.} %\todo{AH: Should we point out that we do not know whether the bound is tight for $c(G) \geq 2$?}

\subsection{Monitoring edge-geodetic sets}

Let $\meg{G}$ denote the size of a smallest monitoring edge-geodetic set of a graph $G$. It was proved in~\cite{MEG1} that $\meg{G}\leq 9\cyclo{G}+\leaf{G}-8$ for every graph $G$, and some graphs were constructed for which $\meg{G}=3\cyclo{G}+\leaf{G}$. We next improve the former upper bound, therefore showing that the latter construction is essentially best possible.

%\todo[inline]{AH: Proof of Theorem~\ref{thm:meg}: The last two paragraphs were a bit difficult to follow. Could the last part of the proof be formulated something like this: Consider the tree $T_F$, and remove also the additional edges of the sets $B_r(u)$ whose endpoints we added to $S$. We are then left with a forest where each tree can be rooted in a natural way by following the structure of $T_F$. Of these trees, every root and every leaf is in S. Moreover, every edge of G that has at least one endpoint not in $S$ is an edge in one such tree. Consider a leaf $v$ and a root $u$ of one tree $T$. Now there is in fact a unique shortest path v-u in G that is completely contained in $T$ (otherwise, there is a vertical edge whose endpoints we added to S in the path $v-u$ in $T$, and $v$ is not in the same tree as u, a contradiction). Thus, all edges lie on a unique shortest path between some leaf and root.}
%\todo[inline]{\ff{FF: I think that works indeed, and is simpler. I don't have time now to write it, please go ahead if time is there!}}

\begin{theorem}\label{thm:meg}
For any graph $G$, we have $\meg{G}\leq 3\cyclo{G}+\leaf{G}+1$. If $G$ contains a cut-vertex, then $\meg{G}\leq 3\cyclo{G}+\leaf{G}$. %\todo{F: the cut-vertex thing can be done here too.}
\end{theorem}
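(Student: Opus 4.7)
The plan is to adapt the general BFS construction from Section~\ref{sec:method}. Choose a root $r\in V(G)$, taking $r$ to be a cut-vertex whenever $G$ has one, and compute a good edge set $F$ with respect to $r$ via Lemma~\ref{lem:BFS}; by Observation~\ref{obs:cyclomatic}, $|F|=\cyclo{G}$, and by Lemma~\ref{lem:tree}, $T_F=G\setminus F$ is a spanning tree whose every root-to-leaf path is isometric in $G$. Let $S$ be the union of four sets: (a) all leaves of $G$ (these are forced into every monitoring edge-geodetic set, since any edge incident to a leaf $\ell$ has length~$1$ and must be monitored by a pair containing $\ell$); (b) both endpoints of each edge of $F$, which immediately monitors every edge of $F$ because each such edge is the unique shortest path between its two endpoints; (c) for each edge $f=uv\in F$, one \emph{witness} vertex, namely the $T_F$-parent of the endpoint of $f$ lying in the deeper BFS-layer (for a vertical edge) or either endpoint's $T_F$-parent (for a horizontal edge); and (d) the root $r$ itself, unless $r$ was chosen as a cut-vertex. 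Counting gives $|S|\leq \leaf{G}+2\cyclo{G}+\cyclo{G}+1=\leaf{G}+3\cyclo{G}+1$, with the ``$+1$'' omitted when a cut-vertex is available, yielding $\leaf{G}+3\cyclo{G}$ in that case.

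It remains to verify that every tree edge of $T_F$ is monitored by~$S$. Fix $e=xy\in E(T_F)$ with $x$ the $T_F$-parent of $y$, and let $\ell$ be any leaf of $T_F$ that is a descendant of $y$; since every leaf of $T_F$ is either a leaf of $G$ or an endpoint of some edge of $F$, we have $\ell\in S$. The $T_F$-path from $r$ to $\ell$ is isometric in $G$ and passes through $e$, so if it is the \emph{unique} shortest $r$-$\ell$ path in $G$, then the pair $(r,\ell)$ monitors~$e$; when $r$ is a cut-vertex and thus omitted from $S$, one argues in the style of Observation~\ref{obs:cut} that a pair of leaves from two different components of $G-r$ serves as a substitute. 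The substantive case is when a second shortest $r$-$\ell$ path exists: such a path must leave the $T_F$-path via some feedback edge $f\in F$ at the BFS-layer of the shallower endpoint of $f$. The witness attached to $f$, paired with $\ell$ or with the appropriate endpoint of $f$, then yields a pair whose shortest path is forced to traverse $e$, because placing the witness one layer above the deeper endpoint of $f$ breaks the layer-wise parity that is the only possible source of shortest-path ambiguity in the BFS decomposition.

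The main obstacle is this last verification: when several feedback edges create interlocking fundamental cycles, a single tree edge may admit multiple alternative shortest paths that must all be blocked simultaneously. The key to handling this is the layered BFS structure, which forces every alternative shortest $r$-$\ell$ path to shadow the $T_F$-path layer by layer, localizing each deviation to a single feedback edge; the corresponding parent-witness then either shortens or lengthens the alternative path enough to destroy its optimality. A careful case analysis, separating horizontal and vertical feedback edges and tracking the layer depth of the witness relative to the endpoints of $f$, completes the argument and yields the claimed bounds.
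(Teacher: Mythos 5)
Your set $S$ is essentially the paper's: for a vertical feedback edge $uv\in F$ with $u$ the deeper endpoint, your ``witness'' (the $T_F$-parent of $u$) is exactly the extra vertex the paper adds, namely the upper endpoint of the unique edge of $B_r(u)$ kept in $T_F$; the counting and the deletion of a cut-vertex root are also fine. The genuine gap is the verification that the tree edges are monitored, which is the entire content of the theorem and which you explicitly defer (``a careful case analysis \dots completes the argument''). Moreover, the pairing strategy you sketch---for each alternative shortest $r$--$\ell$ path, pair the witness of the deviating feedback edge with $\ell$ or with an endpoint of that same feedback edge---fails. Concretely: let $r$ have a pendant neighbour $p$ and two further neighbours $a,b$, both adjacent to a vertex $u_1$; from $u_1$ let two internally disjoint paths $u_1x_3x_4u_2$ and $u_1y_3y_4u_2$ meet at $u_2$, and hang a pendant path $u_2z_6z_7$ below. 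Then $\cyclo{G}=2$, $r$ is a cut-vertex (so your recipe may root there and omit $r$), and one valid outcome is $F=\{bu_1,\,y_4u_2\}$ with witnesses $a$ (parent of $u_1$) and $x_4$ (parent of $u_2$), giving $S=\{p,z_7,b,u_1,y_4,u_2,a,x_4\}$. Take the tree edge $e=x_3x_4$ and the only $T_F$-leaf below it, $\ell=z_7$. The pair $(r,\ell)$, its cut-vertex substitute $(p,\ell)$, and every pair formed from $\ell$, a witness, and the endpoints of that witness's own feedback edge (e.g.\ $(x_4,z_7)$, $(x_4,y_4)$, $(x_4,u_2)$, $(a,z_7)$, $(a,b)$, $(a,u_1)$) admits a shortest path through the $y$-side or through $u_2z_6z_7$ that avoids $e$. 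The pairs that do monitor $e$, such as $(u_1,x_4)$, combine a vertex attached to one feedback edge with the witness of the \emph{other} feedback edge---a pair your per-deviation scheme never produces.

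The paper's proof avoids the whole ``block every alternative $r$--$\ell$ geodesic'' difficulty by a further decomposition of $T_F$: delete from $T_F$, for every vertex $u$ with $|B_r(u)|\geq 2$, the one edge of $B_r(u)$ that was retained (its upper endpoint is precisely your witness). In the resulting forest every subtree root is $r$ or a vertex $u$ with $|B_r(u)|\geq 2$, every subtree leaf is a leaf of $G$, an endpoint of an edge of $F$, or a witness, so all of them lie in $S$; and every non-root vertex $w$ of a subtree has $|B_r(w)|=1$, which forces the subtree path from the root to any of its leaves to be the \emph{unique} shortest path in $G$ between its ends. Hence each tree edge is monitored by the root and a leaf of its own subtree (in the example above, by $(u_1,x_4)$), with no case analysis over interacting feedback edges. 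This localization argument (or an equivalent one) is the missing idea; without it your proof is incomplete. A smaller point: when you drop the cut-vertex root you should also justify that every component of $G-r$ contains a vertex of $S\setminus\{r\}$ before substituting for $r$ in a monitoring pair; this holds because each such component contains either a leaf of $G$ or a cycle, hence an endpoint of an edge of $F$.
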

%\begin{toappendix}
\begin{proof}%[Proof of Theorem~\ref{thm:meg}]
It is not hard to check that when $G$ is a tree, we have $\meg{G}=\leaf{G}$~\cite{MEG1}. Let us thus assume that $\cyclo{G} \geq 1$.

We will first show the bound $\meg{G}\leq 3\cyclo{G}+\leaf{G}+1$. We say that an edge $e$ is \emph{monitored} by a set $S$ if there are two vertices $x,y$ of $S$ such that $e$ lies on all \ah{isometric} paths between $x$ and $y$; $S$ is thus a monitoring edge-geodetic set of $G$ if it monitors all edges of $G$.

Let $r$ be an arbitrary vertex of $G$ that belongs to some cycle of $G$, and let $F$ be a good set of edges (with respect to $r$) obtained through Lemma~\ref{lem:BFS}. We construct a set $S$ as follows: $r$ belongs to $S$, all leaves of $G$ belong to $S$, and for each edge of $F$, both its endpoints belong to $S$. Moreover, for each vertex $u$ of $G$ with $|B_r(u)|\geq 2$, we add to $S$ all endpoints of the edges of $B_r(u)$ (not just the $|B_r(u)|-1$ ones that are in $F$).

It is clear that we have $|S|\leq 3|F|+\leaf{G}+1$, since we add to $S$ at most two vertices for each edge in $F$, with an additional vertex whenever there is a vertical edge in $F$ (this can only happen $|F|$ times). By Observation~\ref{obs:cyclomatic}, we have $|F|\leq \cyclo{G}$, and so $|S|\leq 3\cyclo{G}+\leaf{G}+1$.

We now show that $S$ is a monitoring edge-geodetic set. Let $e=uv$ be an edge of $G$. If both endpoints $u,v$ of $e$ are in $S$, then $e$ is clearly monitored by them. In particular, this is the case if $e$ is horizontal with respect to $r$.

Assume that $e$ has at most one endpoint in $S$. Thus $e$ is vertical with respect to $r$, and $e$ is also an edge in $T_F$. Let us consider the tree $T_F$ further, and remove the additional edges of the sets $B_r(w)$ whose endpoints we added to $S$, i.e. all edges in some $B_r (w)$ where $|B_r(w)|\geq 2$. The resulting graph is a forest where each tree can be rooted in a natural way by following the structure of $T_F$ when the root is $r$. Of these trees, every root and every leaf is in $S$. Moreover, the edge $e$ is an edge in one such tree. Consider a leaf $l$ and a root $r'$ of one tree $T$. Now there is in fact a unique \ah{isometric} path between $l$ and $r'$ in $G$, and this path is completely contained in $T$ (otherwise, there is a vertical edge whose endpoints we added to $S$ in the path between $l$ and $r'$ in $T$, and $l$ is not in the same tree as $r'$, a contradiction). Thus, $e$ is monitored by a leaf and the root of the tree that contains $e$.

Assume then that $G$ contains a cut-vertex. Since the root $r$ can be arbitrarily, we can choose $r$ to be a cut-vertex. Now the set $S \setminus \{r\}$ is a monitoring edge-geodetic set. Indeed, if $r$ and $u \in S$ monitor an edge $e$, then $e$ is also monitored by $u$ and $v \in S$ where $v$ is in a different connected component of $G - r$ as $u$. Thus, when $G$ contains a cut-vertex, we have $\meg{G}\leq 3\cyclo{G}+\leaf{G}$.
\end{proof}

%\end{toappendix}

\subsection{Distance-edge-monitoring-sets}

We now turn our attention to distance-edge-monitoring-sets and \ah{prove} Conjecture~\ref{conj-dem}.  Let $\dem{G}$ denote the size of a smallest distance-edge-monitoring set of a graph $G$. We shall use the following lemma.

\begin{lemma}[{\cite[Observation 4]{DEM1}}]\label{lem:DEM}
For any graph $G$, any distance-edge-monitoring-set of its base graph $G_b$ is also one of $G$.
\end{lemma}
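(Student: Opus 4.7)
The plan is to verify separately that each edge of $G$ is monitored by $S$, splitting the edges into those inside the base graph $G_b$ and those removed during the iterative leaf-pruning that produces $G_b$.

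First, I would establish a structural fact: the subgraph induced by $E(G)\setminus E(G_b)$ is a forest whose trees $T_v$ are attached to $G_b$ each at a unique vertex $v\in V(G_b)$. This is because each leaf-removal step excises exactly a degree-one vertex together with its single incident edge, so the removed edges form pendant subtrees hanging off of $G_b$. A crucial corollary is that, for any two vertices $u,w\in V(G_b)$, every shortest $u$-$w$ path in $G$ stays inside $G_b$: leaving $G_b$ into some pendant tree $T_v$ would force the path to re-enter through the unique attaching vertex $v$, contradicting minimality. Hence the distance between any two vertices of $V(G_b)$ is the same in $G$ and in $G_b$, and the families of shortest paths coincide.

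Second, for an edge $e\in E(G_b)$, I pick the pair $(x,y)$ with $x\in S$ and $y\in V(G_b)$ that witnesses the monitoring of $e$ inside $G_b$. By the observation above, the shortest $x$-$y$ paths in $G$ are exactly those in $G_b$, so $e$ lies on each of them in $G$ as well, and the same witnesses work.

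Third, for an edge $e\in E(G)\setminus E(G_b)$, observe that $e$ belongs to some pendant tree $T_v$ and is therefore a cut-edge of $G$; removing it splits $G$ into a ``core side'' containing all of $V(G_b)$ (hence all of $S$) and a ``branch side'' which is a subtree of $T_v$. I would choose $y$ to be any leaf of $G$ lying on the branch side of $e$; then for any $x\in S$, every $x$-$y$ path must traverse the cut-edge $e$, and in particular every shortest such path does. The only edge case is when $G_b$ is empty, which happens precisely if $G$ is a tree, in which case the statement is vacuous. The main subtlety is the first paragraph: cleanly arguing that shortest paths between vertices of $V(G_b)$ never venture into the pendant forest, since this is the linchpin that transfers the DEM property from $G_b$ to $G$.
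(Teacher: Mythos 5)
The paper never proves this lemma itself: it is imported verbatim from \cite{DEM1} (their Observation~4), so there is no internal proof to compare against, and your argument is a self-contained reconstruction. In substance it is correct. The decomposition of $E(G)\setminus E(G_b)$ into pendant trees each attached to $G_b$ at a unique vertex, the observation that a shortest path between two vertices of $G_b$ cannot enter a pendant tree (it would have to pass through the attachment vertex twice), and the resulting coincidence of distances and shortest-path families between $G$ and $G_b$ do transfer the monitoring of every edge of $G_b$ with the same witness pair; and every edge outside $G_b$ is a bridge of $G$ separating all of $S\subseteq V(G_b)$ from its pendant side, so it is monitored by any $x\in S$ together with a vertex $y$ on that side (any vertex there works; your choice of a leaf is fine and such a leaf exists). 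The one blemish is the closing remark about the degenerate case: if $G$ is a tree with at least one edge, $G_b$ is a single vertex (not empty), and the statement is then not vacuous but genuinely delicate --- the empty set is a distance-edge-monitoring set of the edgeless $G_b$ yet not of $G$, while any nonempty set still works by your bridge argument, which needs some $x\in S$ to exist. So the correct caveat is ``requires $S\neq\emptyset$'', equivalently ``$G$ is not a tree'' (then $G_b$ has edges and any DEM set of $G_b$ is automatically nonempty), rather than ``vacuous''. This does not affect the paper, since Theorem~\ref{thm:dem} invokes the lemma only after assuming $G$ is not a tree.
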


\begin{theorem}\label{thm:dem}
For any connected graph $G$, $\dem{G}\leq \cyclo{G}+1$.
\end{theorem}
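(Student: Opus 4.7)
The plan is to apply the BFS machinery of Section~\ref{sec:method} to $G$ directly. Pick any vertex $r\in V(G)$ and invoke Lemma~\ref{lem:BFS} to produce a good edge set $F$ with respect to $r$; by Observation~\ref{obs:cyclomatic} one has $|F|=\cyclo{G}$. I will define $S$ to consist of $r$ together with, for every edge $uv\in F$, its endpoint farther from $r$ (ties broken arbitrarily for horizontal edges); clearly $|S|\le\cyclo{G}+1$. It then remains to verify that $S$ is a distance-edge-monitoring set, i.e., that every edge $e\in E(G)$ admits a pair $(x,y)\in S\times V(G)$ such that $e$ lies on all shortest $x$-$y$ paths.

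The verification splits cleanly on whether $e\in F$ or $e\in T_F$. If $e=uv\in F$, then by construction one endpoint, say $u$, lies in $S$, and the pair $(u,v)$ monitors $e$ since the unique shortest $uv$-path is the edge $e$ itself. If instead $e\in T_F$, then, because $F$ contains every horizontal edge with respect to $r$, $e$ is vertical, so I write $e=uv$ with $\dist{u}{r}=\dist{v}{r}+1$. In particular $e\in B_r(u)$, and since $F$ contains exactly $|B_r(u)|-1$ edges of $B_r(u)$, the edge $e$ is the unique element of $B_r(u)\setminus F$. I further split on the size of $B_r(u)$.

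If $|B_r(u)|\ge 2$, then $B_r(u)\cap F$ is non-empty, say it contains $uw$; because $\dist{u}{r}>\dist{w}{r}$, the endpoint $u$ of $uw$ was added to $S$ by construction, and the pair $(u,v)$ monitors $e$ as before. If $|B_r(u)|=1$, then $B_r(u)=\{e\}$; but the last edge of every shortest $r$-to-$u$ path must lie in $B_r(u)$, so every shortest $r$-to-$u$ path ends with $e$, and the pair $(r,u)$ monitors $e$. This exhausts the case analysis. The only real design choice is which endpoint of each $F$-edge to include in $S$: systematically taking the one farther from $r$ ensures that every vertex $u$ with $|B_r(u)|\ge 2$ automatically lands in $S$, while the remaining vertices---those having a unique BFS parent---have their upward tree edge monitored directly from the root. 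I expect no hidden obstacle, and in particular no appeal to the base graph via Lemma~\ref{lem:DEM} is required.
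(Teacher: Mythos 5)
Your construction is exactly the one in the paper: a root $r$, a good edge set $F$ from Lemma~\ref{lem:BFS}, and $S$ consisting of $r$ together with, for each edge of $F$, the endpoint farther from $r$ (either one for horizontal edges), giving $|S|\le \cyclo{G}+1$; the interest is in how the verification differs, and the differences work in your favour. First, you are right that no preprocessing is needed: the paper first invokes the known bound for $\cyclo{G}\le 2$ and Lemma~\ref{lem:DEM} to pass to the base graph and assume minimum degree at least~$2$, but your argument goes through verbatim for trees and for graphs with leaves (a pendant edge $uv$ with $u$ the lower endpoint has $|B_r(u)|=1$, so the pair $(r,u)$ monitors it), so these reductions are superfluous. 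Second, your case split on $|B_r(u)|$ is more careful than the paper's: after reducing to a vertical edge $e=uv\notin F$ with $\dist{u}{r}=\dist{v}{r}+1$, the paper asserts that $e\notin F$ implies $|B_r(u)|=1$, which is not a valid deduction (in $C_4$ rooted at any vertex, the vertex $u$ antipodal to $r$ has $|B_r(u)|=2$ and exactly one of its two upward edges escapes $F$), and in that situation $e$ need not lie on all shortest $u$--$r$ paths, so the pair $(r,u)$ used by the paper can fail to monitor $e$. Your observation that $|B_r(u)|\ge 2$ forces $u\in S$ (some edge of $B_r(u)$ lies in $F$ and $u$ is its farther endpoint), whence $(u,v)$ monitors $e$ via the unique shortest path between adjacent vertices, supplies exactly the missing case; so your proof is correct and in fact tightens a small gap in the published argument while reaching the same bound by the same construction.
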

\begin{proof}
We say that an edge $e$ is \emph{monitored} by a set $S$ if there are two vertices $x\in S$ and $y\in V(G)$ such that $e$ lies on all \ah{isometric} paths between $x$ and $y$; $S$ is thus a monitoring edge-geodetic set of $G$ if it monitors all edges of $G$.

The statement is true when $\cyclo{G}\leq 2$ by~\cite{DEM1}, so we can assume $G$ is not a tree. By Lemma~\ref{lem:DEM}, we may assume $G$ has minimum degree at least~2.

Let $r$ be an arbitrary vertex of $G$, and let $F$ be a good set of edges (with respect to $r$) obtained through Lemma~\ref{lem:BFS}. We construct a set $S$ as follows: $r$ belongs to $S$, and for each edge of $F$, one of its endpoints belongs to $S$; if the edge is vertical with respect to $r$, we choose the endpoint that is farthest from $r$; if it is horizontal, we choose any of the two endpoints.

It is clear that $|S|\leq \cyclo{G}+1$, as $|S|\leq |F|+1$ and $|F|\leq \cyclo{G}$ by Observation~\ref{obs:cyclomatic}. We next show that $S$ is distance-edge-monitoring.

Let $e=uv$ be an edge of $G$. If $e$ is an edge of $F$, it is monitored by $S$, since one of its endpoints is in $S$. Thus, we can  assume that $e$ is not in $F$. Hence, it is vertical with respect to $r$, and we assume without loss of generality that $\dist{u}{r}=\dist{v}{r}+1$. Since $e$ is not in $F$, we have $|B_r(u)|=1$. Thus, all \ah{isometric} paths from $u$ to $r$ go through $e$, and hence $e$ is monitored by $S$. Thus all edges are monitored by $S$, which establishes the claim.
\end{proof}
%\end{toappendix}

\section{Path covers/partitions and variants}\label{sec:paths}

In this section, we consider the path covering and partition problems. We focus on isometric path edge-covers (sets of isometric paths that cover all edges of the graph) and on isometric path partitions (sets of isometric paths that partition the vertex set of the graph). Indeed, those have the most restrictive definitions and the obtained bounds thus hold for the other related path covering/partition problems from Figure~\ref{fig:diagram}. We denote by $\ipec{G}$ and by $\ipp{G}$, respectively, the smallest size of an isometric path edge-cover and isometric path (vertex-)partition, respectively.

\subsection{Isometric path edge-cover}

\begin{theorem}\label{thm:ipec}
For any graph $G$, $\ipec{G}\leq 3\cyclo{G}+\lceil(\leaf{G}+1)/2\rceil$.
\end{theorem}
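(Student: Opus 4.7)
The plan is to deploy the general BFS method of Section~\ref{sec:method}. I will first pick a convenient root $r$ of $G$ (preferably a cut-vertex of $G$, in the spirit of the proof of Theorem~\ref{thm:meg}, whenever one exists) and invoke Lemmas~\ref{lem:BFS} and~\ref{lem:tree} to obtain a good edge set $F$ with $|F|=\cyclo{G}$ together with the BFS tree $T_F$, whose every root-to-leaf path is isometric in $G$. The isometric path edge-cover will then be assembled from two sub-families, one ``tree-side'' and one ``feedback-side''.

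The first sub-family will cover the tree edges of $T_F$. I will pair up the $\leaf{G}$ leaves of $G$, using $r$ as an additional ``pseudo-leaf'' so that the pairing yields exactly $\lceil (\leaf{G}+1)/2\rceil$ pairs. For each pair $\{u,v\}$, the natural $T_F$-path from $u$ to $v$ through their LCA has two halves that are each isometric in $G$ by Lemma~\ref{lem:tree}; to promote the concatenation to an isometric path, I will exploit that, whenever $r$ is a cut-vertex of $G$ separating $u$ and $v$, every $u$-$v$ walk in $G$ must cross $r$, so that the $T_F$-path (which then has $r$ as its LCA) is forced to be isometric. The pairing will be arranged so that as many pairs as possible consist of leaves separated by $r$ in $G$; pairs trapped in the same component of $G-r$ will be dispatched either by recursing into that component with the same construction, or by rerouting the pair-path through a well-chosen edge of $F$, the shortfall being absorbed into the second sub-family.

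The second sub-family will cover the $\cyclo{G}$ edges of $F$ and any residual edges of $T_F$. For each edge $e=uv\in F$, I will add up to three isometric paths: the single-edge path $e$, which is trivially isometric, plus an isometric $r$-to-descendant path in $T_F$ through $u$ and the symmetric one through $v$. The latter two also cover the tree edges incident to the \emph{artificial} leaves of $T_F$, namely the leaves of $T_F$ that are not leaves of $G$; each such vertex, by the defining property of a good edge set together with Observation~\ref{obs:cyclomatic}, is an endpoint of at least one edge of $F$, so a budget of three paths per $F$-edge is enough to reach them all. This sub-family therefore contributes at most $3\cyclo{G}$ paths, and adding the two sub-families gives the announced bound $3\cyclo{G}+\lceil(\leaf{G}+1)/2\rceil$.

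The main obstacle will be the first sub-family: certifying that the leaf-to-leaf paths extracted from $T_F$ are genuinely isometric in $G$. Choosing $r$ as a cut-vertex settles it for every pair of leaves separated by $r$, but configurations where many leaves fall into a single component of $G-r$ are more delicate. These hard cases will be resolved either by recursing into the offending component, or by replacing a ``bad'' pair-path with a hybrid path that shortcuts through an $F$-edge; the $+1$ slack in $\lceil (\leaf{G}+1)/2\rceil$ and the generous budget of three isometric paths per $F$-edge are precisely what is needed to absorb these corrections without exceeding the claimed count.
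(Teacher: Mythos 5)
Your overall architecture (a good edge set $F$ obtained from Lemma~\ref{lem:BFS} with $|F|=\cyclo{G}$ by Observation~\ref{obs:cyclomatic}, roughly three isometric paths per edge of $F$, plus a pairing of the leaves together with $r$ contributing $\lceil(\leaf{G}+1)/2\rceil$ paths) is the same as the paper's, but the step you yourself flag as ``the main obstacle'' is a genuine gap, and the fixes you sketch do not close it. You insist that the leaf-pair paths be paths of $T_F$ and that they cover tree edges of $T_F$, and you can certify isometry only when the two paired leaves are separated by the root $r$. Nothing guarantees such a pairing exists (all leaves of $G$ may lie in a single component of $G-r$, e.g.\ several leaves hanging off one pendant subtree), and for pairs in the same component you only gesture at ``recursing into the component'' or ``rerouting through a well-chosen edge of $F$'': recursion changes the root and hence the good edge set, so the global budget of $3\cyclo{G}$ is no longer accounted for, and a pair-path rerouted through an $F$-edge need not be isometric at all. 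No argument is given that the $+1$ slack or the three-paths-per-$F$-edge budget absorbs these corrections, nor that the tree edges of $T_F$ not lying on any chosen leaf-pair path or root-to-$F$-endpoint path are covered. As written, the bound is not established.

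The paper sidesteps the isometry problem entirely, and this is the ingredient you are missing. It runs the construction on the base graph $G_b$, covering \emph{all} edges of $G_b$ with at most $3\cyclo{G}$ isometric paths anchored at $r$ (each horizontal edge of $F$ as a one-edge path together with isometric paths from its two endpoints to $r$, and, for each vertex $v$ with $|B_r(v)|\ge 2$, $|B_r(v)|$ isometric paths from $v$ to $r$ starting with the distinct edges of $B_r(v)$); the leaf pairing is only responsible for the pendant forest $G-E(G_b)$. There one may take \emph{any} isometric path of $G$ between two paired vertices of $L'=L\cup\{r\}$, so isometry is free, and coverage is enforced by an exchange argument: if a pendant bridge $e$ is uncovered, there exist two pairs lying on opposite sides of $e$, and swapping their partners covers $e$ without uncovering any previously covered pendant edge, because inside a pendant tree the path from a leaf to the attachment vertex is unique and every path crossing $e$ must use $e$. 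Iterating this yields the claimed cover. If you want to salvage your write-up, replace the $T_F$-path pairing and the recursion/rerouting heuristics by this pass-to-$G_b$ plus pairing-exchange argument (or supply a complete substitute for it with explicit bookkeeping).
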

\begin{proof}
We construct a set $S$ of paths as follows. Consider the base graph $G_b$ of $G$, let $r$ be an arbitrary vertex of $G_b$, and let $F$ be a good set of edges of $G_b$ (with respect to $r$) obtained by Lemma~\ref{lem:BFS}. 
For each horizontal edge $xy$ of $F$, we add it (as a path) to $S$, as well as \ah{an isometric} path from $x$ to $r$ and one from $y$ to $r$. For each vertex $v$ with $|B_r(v)|\geq 2$, we add to $S$, $|B_r(v)|$ \ah{isometric} paths from $v$ to $r$, each starting with a different edge from $B_r(v)$. Finally, if $G$ contains some leaves, we consider the set $L'$ formed by the set $L$ of leaves of $G$, together with the vertex $r$, and an arbitrary pairing of $L'$, of size $\lceil(\leaf{G}+1)/2\rceil$, such that each vertex of $L'$ is paired with some other one (if $\leaf{G}+1$ is odd, one vertex of $L'$ may be paired with two vertices, but all others are paired with one only). 

Consider a set $S'$ of $\lceil(\leaf{G}+1)/2\rceil$ \ah{isometric} paths between each two mutually paired vertices of $L'$. We incrementally modify the pairing as follows. If all edges of $G-E(G_b)$ are covered by $S'$, we do nothing. Otherwise, assume some edge $e$ of $G-E(G_b)$ is not covered by a path of $S'$. Note that $G-E(G_b)$ is a forest, and $e$ is a bridge of $G$, with one component of $G-e$ containing at least one leaf, and one component containing $r$. Since that leaf and $r$ are paired vertices in $L'$ but $e$ is not covered by the \ah{isometric} paths in $S'$, there must exist two pairs $\{a,b\}$ and $\{c,d\}$ of paired vertices of $L'$, each pair being in one component of $G-e$, say $a$ and $b$ are in a tree component of $G-e$ and $c,d$ are in the same component as $r$. We modify the pairing of $L'$ by replacing $\{a,b\}$ and $\{c,d\}$ by $\{a,c\}$ and $\{b,d\}$, and we claim that the \ah{isometric} paths induced by this new pairing cover the same edges from $G-E(G_b)$ as before, and also, the edge $e$. Indeed, as $G-E(G_b)$ is a forest (and each component of $G-E(G_b)$ has a unique vertex with neighbors in $G_b$), any \ah{isometric} path (in $G$) from a leaf to any vertex of the same component of $G-E(G_b)$ is unique. Thus, the $a-b$ \ah{isometric} path is unique, and any \ah{isometric} path from $a$ to $c$ (and from $b$ to $d$) goes through $e$. Hence, the union of any two such \ah{isometric} paths contains the edges of the $a-b$ path. A similar argument holds for the path from $c$ to $d$. Thus, we continue this process until all edges of $G-E(G_b)$ are covered by $S'$, increasing the number of covered edges at each step. Finally, we add $S'$ to $S$. 

It is clear that $S$ contains only isometric paths, by construction; moreover, $|S|\leq 3\cyclo{G}+\lceil(\leaf{G}+1)/2\rceil$ because we add at most three paths to $S$ for each edge of $F$ in the first steps of the construction, and the last step of the construction adds $\lceil(\leaf{G}+1)/2\rceil$ additional paths to $S$. It remains to show that $S$ covers all edges of $G$. Let $e$ be an edge. If $e$ is an edge of $G-E(G_b)$, by the last part of the construction of $S$, $e$ is covered by some \ah{isometric} path between two vertices of $L'$. If $e$ is a horizontal edge with respect to $r$, then $e$ itself is a path of $S$, so $e$ is covered. If $e$ is a vertical edge with respect to $r$ and $e$ belongs to $G_b$, then there must be a vertex $w$ of $G_b$ with $|B_r(w)|\geq 2$ and $e$ lies on \ah{an isometric} path from $w$ to $r$. Let $w$ be chosen so as to be the closest to $e$, among all such vertices. Then, we have selected some \ah{isometric} path from $w$ to $r$ in the first step of the construction, each containing a distinct edge of $B_r(w)$. By the choice of $w$, one such path from $w$ to $r$ goes through $e$, and thus $e$ is covered. 
\end{proof}

The upper bound of Theorem~\ref{thm:ipec} is nearly tight, indeed, consider (again) the graph formed by a disjoint union of $k$ odd cycles and $l$ paths, all identified via a single vertex. The obtained graph has cyclomatic number $k$, $l$ leaves, and isometric path edge-cover number $3k+\lceil l/2\rceil$.

\subsection{Isometric path partition}

\begin{theorem}\label{thm:ipp}
For any graph $G$, $\ipp{G}\leq 2\cyclo{G}+\leaf{G}$.
\end{theorem}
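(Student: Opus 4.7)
The plan is to reuse the BFS-based machinery and extract the desired partition directly from the spanning tree $T_F$. Let $r$ be an arbitrary vertex of $G$ and let $F$ be a good edge set with respect to $r$ produced by Lemma~\ref{lem:BFS}. By Lemma~\ref{lem:tree}, $T_F$ is a spanning tree of $G$ rooted at $r$ in which every path from $r$ to a leaf of $T_F$ is an isometric path in $G$.

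First I would partition $V(T_F) = V(G)$ into vertex-disjoint \emph{downward} paths of $T_F$ using a heavy-path-style decomposition: at every vertex of $T_F$ that has at least one child, choose one distinguished child to continue the current path, and treat each remaining child as the starting point of a new path. Every resulting path runs from some ancestor in $T_F$ down to a descendant leaf of $T_F$, and a direct count of path starts gives exactly $\leaf{T_F}$ many such paths (in fact at most $\leaf{T_F}$, treating any edge-cases of $r$ itself being a leaf of $T_F$). Since each such path is a subpath of an $r$-to-leaf path in $T_F$, which is isometric in $G$ by Lemma~\ref{lem:tree}, and subpaths of isometric paths are isometric, every path in the partition is isometric in $G$.

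Next I would show that $\leaf{T_F} \leq \leaf{G} + 2\cyclo{G}$. Every leaf $v$ of $G$ has a unique incident edge $vw$, and since $B_r(v) = \{vw\}$ the good-edge condition forces $vw \notin F$; consequently leaves of $G$ remain leaves of $T_F$. On the other hand, for every vertex $v$ that is a leaf of $T_F$ but not of $G$, we have $\deg_G(v) \geq 2$ and $\deg_{T_F}(v) = 1$, so at least one edge of $F$ is incident to $v$. Summing these contributions over all such ``new'' leaves and using that each edge of $F$ has at most two endpoints produces at most $2|F| = 2\cyclo{G}$ new leaves, invoking Observation~\ref{obs:cyclomatic}. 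Combining the two steps yields $\ipp{G} \leq \leaf{T_F} \leq \leaf{G} + 2\cyclo{G}$.

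The main delicate point is making sure the downward decomposition delivers paths that are isometric in $G$ rather than only in $T_F$, which hinges on Lemma~\ref{lem:tree} plus the standard subpath-of-isometric-is-isometric property. Counting $\leaf{T_F}$ is then a straightforward bookkeeping exercise, using that bridges (in particular leaf edges of $G$) are never picked by the good-edge construction.
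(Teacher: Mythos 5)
Your proof is correct, and it reaches the same bound by a genuinely different route than the paper. The paper's proof builds the partition greedily: for each horizontal edge of $F$ and each vertex $v$ with $|B_r(v)|\geq 2$ it takes isometric paths towards the root $r$ and keeps only the maximal subpath not overlapping previously chosen paths, then does the same for each leaf of $G$, counting at most two paths per edge of $F$ and one per leaf (coverage is argued as in Theorem~\ref{thm:ipec}). You instead partition the tree $T_F$ itself into downward paths, one per (rooted) leaf of $T_F$, get isometry in $G$ from Lemma~\ref{lem:tree} together with the fact that subpaths of isometric paths are isometric, and then transfer the count through the separate estimate $\leaf{T_F}\leq \leaf{G}+2|F|=\leaf{G}+2\cyclo{G}$ (pendant edges of $G$ are never in a good edge set, while any other leaf of $T_F$ must be an endpoint of an edge of $F$, and $|F|=\cyclo{G}$ by Observation~\ref{obs:cyclomatic}). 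Your organization buys simplicity: vertex-disjointness and coverage are immediate (every vertex lies on exactly one downward path), the delicate maximal-subpath bookkeeping of the greedy construction disappears, and you isolate the clean intermediate statement $\ipp{G}\leq\leaf{T_F}$ for any rooted spanning tree whose root-to-leaf paths are isometric. What the paper's greedy version buys is uniformity with the edge-cover argument of Theorem~\ref{thm:ipec}, where a tree decomposition alone would not work since the edges of $F$ also have to be covered; for the vertex-partition statement alone, your argument is arguably the cleaner one. The only points to state carefully (which you essentially do) are that Claim 1 uses $v\neq r$ (the case where $r$ is a leaf of $G$ only reduces the number of downward paths) and that ``leaf of $T_F$'' should be read as ``vertex with no children'' when counting paths.
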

\begin{proof}
The proof is similar to the one of Theorem~\ref{thm:ipec}, except that it is not necessary to consider the base graph. Consider an arbitrary vertex $r$ of $G$, and let $F$ be a good set of edges of $G$ (with respect to $r$) obtained by Lemma~\ref{lem:BFS}. We construct a set $S$ of isometric vertex-disjoint paths as follows. Initially, $S=\emptyset$. For each horizontal edge $xy$ of $F$, we consider \ah{an isometric} path $P_x$ from $x$ to $r$ and \ah{an isometric} path $P_y$ from $y$ to $r$. We first add to $S$ the maximal subpath of $P_x$ (starting at $r$), that does not overlap with any path already in $S$. Then, we do the same for $P_y$. Similarly, for each vertex $v$ with $|B_r(v)|\geq 2$, we let $N_v$ be the set of vertices $u$ with $uv\in B_r(v)$. We consider an arbitrary \ah{isometric} path from $v$ to $r$ (it goes through some vertex, say $u_0$, of $N_v$). Then, for every vertex $u\in N_v\setminus\{u_0\}$, we sequentially consider \ah{an isometric} path $P_u$ from $u$ to $r$. We proceed as before for each of these paths: we add to $S$ a maximal subpath of $P_x$ starting at $x$, that does not intersect any path already in $S$. Finally, for every leaf $v$ of $G$, we again consider \ah{an isometric} path from $v$ to $r$, and select its maximal subpath (starting from $v$) that does not intersect any other path in $P$.

We have added at most $2\cyclo{G}+\ell(G)$ paths to $S$ (at most two for every edge in $F$ and at most one for every leaf). By construction, $S$ contains only isometric paths, and they are pairwise vertex-disjoint. By similar arguments as in the proof of Theorem~\ref{thm:ipec}, every vertex is covered by a path of $S$, and so we have obtained an isometric path partition of the desired size.
\end{proof}

Once again, the bound is nearly tight by the same graph formed by a disjoint union of $k$ odd cycles (each of length at least 5) and $l$ paths, all identified via a single vertex. It has cyclomatic number $k$, $l$ leaves, and isometric path partition number $2k+l-1$.

We remark that for many graphs, the bound of $\ell(G)$ to cover the leaves of the graph in the above theorem is not necessarily tight. Indeed, for some trees, one may have an (isometric) path partition of size $\lceil\leaf{G}/2\rceil$: consider for example a path $P_t$ on $t\geq 3$ vertices and for each of its $t-2$ internal vertices, say $v$, attach a copy of $P_3$ whose central vertex is made adjacent to $v$. This tree has $2t-2$ leaves and a path partition of size $t-1$.

In fact, it is known that one gets a more precise (always tight) bound for the path partition number of a tree by considering its \emph{scattering number}, see~\cite{JUNG}.

\section{Algorithmic consequences}\label{sec:algo}

As mentioned in the introduction, for many of the problems studied here, there are no efficient algorithms for graphs of bounded treewidth: for example, it is NP-hard to compute the metric dimension or the geodetic number of graphs of constant treewidth (in fact, even pathwidth)~\cite{LP22,tale2025GS}. 
%and there is \textcolor{red}{no} FPT algorithm with respect to treewidth that computes the geodetic set number~\cite{KeKo20}. 
The complexity of the strong geodetic set problem with respect to treewidth is unknown~\cite{DFPT24}.

Recall that the treewidth is at most the cyclomatic number plus one. Next, we show that for the vertex-subset problems studied in this paper, our bounds imply polynomial-time (XP) algorithms for graphs of bounded cyclomatic number, thus partially answering some algorithmic open problems in this area.

\begin{theorem}\label{thm:algo}
For all the variants of geodetic sets and metric dimension considered in this paper, if we have an upper bound on the solution size of $a\cdot\cyclo{G}+f(\leaf{G})$ for some $a\in\mathbb{N}$, we obtain an algorithm with running time $O(n^{a\cdot\cyclo{G}})$ on graphs $G$ of order $n$.
\end{theorem}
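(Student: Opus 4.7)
The plan is to turn each upper bound into a bounded enumeration algorithm. The key observation is that, for every problem considered, an optimal solution can be written as the disjoint union of a ``leaf part'' of size at most $f(\leaf{G})$ that is either forced or canonically computable in polynomial time, plus a ``core part'' of size at most $a\cdot\cyclo{G}$. Since the core part lives in a vertex set of size $n$, enumerating subsets of size at most $a\cdot\cyclo{G}$ takes $O(n^{a\cdot\cyclo{G}})$ time, and each candidate can be verified in polynomial time after a single all-pairs shortest-paths computation.

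First, I would handle the geodetic variants (geodetic sets, monitoring edge-geodetic sets, distance-edge-monitoring sets). Here it is standard, and used throughout Section~\ref{sec:geod}, that every leaf of $G$ must belong to every solution; in the distance-edge-monitoring-set case one first reduces to the base graph via Lemma~\ref{lem:DEM}. Thus the leaf part is uniquely determined as the set of leaves (respectively, is empty after the reduction). The algorithm then enumerates all subsets $C\subseteq V(G)$ with $|C|\le a\cdot\cyclo{G}$, checks whether $C$ together with the leaf part is a valid solution for the problem at hand, and outputs the smallest successful candidate. Correctness is immediate from the upper bound: any optimal solution $S^{*}$ satisfies $|S^{*}\setminus L|\le a\cdot\cyclo{G}$, so it is among the candidates enumerated.

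Second, I would treat the metric dimension variants. Here the leaf part is less rigid, but our constructions (Theorem~\ref{thm:doubly}, Proposition~\ref{prop:edgeresolving}, Proposition~\ref{prop:mixedresolving}, Theorem~\ref{thm:mindeg1}) all take the form ``branch-resolving set on pendant trees $+$ good-edge-set endpoints on the base graph''. Using Propositions~\ref{MMD:prop:trees} and~\ref{MDED:prop:trees}, one computes in polynomial time a branch-resolving set (or the set of leaves, for the mixed case) of size exactly $f(\leaf{G})$ that optimally handles every pendant subtree; this is the leaf part. The exchange argument needed is that inside any leg attached to a branch vertex, swapping any chosen internal vertex for the leaf of the leg preserves the resolving property, so an optimal solution can always be normalized so that its intersection with the pendant trees equals the computed leaf part. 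The remaining core vertices lie in (a bounded neighborhood of) the base graph and number at most $a\cdot\cyclo{G}$, and are enumerated exactly as before.

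The main obstacle is precisely this normalization step for the metric dimension variants: we must show that restricting the search to candidates whose ``leaf part'' coincides with the canonical one loses no optimality. I would argue this leg by leg, showing that replacing a pendant-vertex choice inside a leg by the leaf of that leg does not destroy the resolution of any pair of vertices or edges (a pair resolved by a vertex at distance $d$ from the branch vertex remains resolved by the leaf, since distances from the leaf order-preserve those from the replaced vertex for everything outside the leg). Once this is in place, the running time bound $O(n^{a\cdot\cyclo{G}})\cdot\mathrm{poly}(n)$ follows immediately; writing poly factors into the $O(\cdot)$ when $a\cdot\cyclo{G}\ge 1$, and observing that the case $\cyclo{G}=0$ of trees is already handled in polynomial time by the classical results cited just before Conjecture~\ref{conj-dimedim}, yields the claimed $O(n^{a\cdot\cyclo{G}})$ bound.
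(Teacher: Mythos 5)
Your proposal follows essentially the same route as the paper's proof: canonically fix the leaf part in polynomial time (all leaves for the geodetic-type problems, a branch-resolving/leaf-based set for the metric dimension variants), then enumerate all vertex subsets of size at most $a\cdot\cyclo{G}$, add the fixed leaf part, and verify each candidate in polynomial time. The only difference is that you make explicit the leg-exchange normalization justifying that the canonical leaf part can be assumed in an optimal solution, a step the paper treats more briefly by noting that selecting $f(\leaf{G})$ leaf-structure vertices is necessary for each of the considered problems.
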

\begin{proof}
The algorithm needs to pre-process the leaves and compute a subset of the leaves of size $f(\leaf{G})$. This can be done in polynomial time for all the considered problems. For geodetic set types of problems, one simply selects all the leaves (for distance-edge-monitoring sets, we must not select any leaf). This can be done in time $O(n+\cyclo{G})=O(n^2)$. For metric dimension related problems, one has to compute the structural shape of the leaves; this can be done in time $O(n+\cyclo{G})$ as well, see for example~\cite{EpsteinWeighted,Khuller96}. %For path-covering problems, we can proceed using the incremental method presented in the proof of Theorem~\ref{thm:ipec} in time $O(n^2)$.

After that (and noting that selecting $f(\leaf{G})$ leaves of $G$ is necessary in each of the considered problems), the proofs of our bounds show that $a\cdot\cyclo{G}$ are sufficient to extend the chosen leaf subset to a solution. It thus suffices to iterate over all possible subsets of vertices of size at most $a\cdot\cyclo{G}$:  consider this as a potential solution, and add the required set of leaves to the solution, and check whether it is a valid solution. This yields the desired running time.
\end{proof}

\section{Conclusion}\label{sec:conclu}

We have demonstrated that a simple technique based on breadth-first-search is very efficient to obtain bounds for many distance-based covering problems, when the cyclomatic number and the number of leaves are considered. This resolves or advances several open problems and conjectures from the literature on this type of problems. There remain some gaps between the obtained bounds and the conjectures or known constructions, that still need to be closed. \ff{Moreover, in some cases, obtaining the best possible upper bound for 2-connected graphs of arbitrarily large cyclomatic number is also an interesting open problem, for example for the geodetic number and the monitoring-edge geodetic number.}

A refinement of the cyclomatic number of a (connected) graph $G$ is called its \emph{max leaf number}, which is the maximum number of leaves in a spanning tree of $G$. It is known that the cyclomatic number is always upper-bounded by a quadratic function of the max leaf number plus the number of leaves~\cite{E15}, so, all our bounds also imply bounds using the max leaf number only.

Regarding the algorithmic applications, we note that the XP algorithms described in Theorem~\ref{thm:algo} can sometimes be improved to obtain an FPT algorithm. This is the case for geodetic sets~\cite{KeKo20}, but whether this is possible for the metric dimension remains a major open problem~\cite{E15,KeKo20} (this is however shown to be possible for the larger parameter ``max leaf number''~\cite{E15}). 

Also, we leave open whether similar algorithms can be obtained for the path covering/partition problems studied here. As one has to decide where the paths go through, it seems a little bit more difficult to design such an algorithm than for the vertex-subset problems. This would be quite interesting, as the complexity of the isometric path cover problem with respect to treewidth is unknown~\cite{DFPT24}.
%\cite{boesch1974covering,Franzblau_Raychaudhuri_2002,goodman1974hamiltonian,kundu1976linear,S74}

We have not been exhaustive. There might be other distance-based covering problems for which the same approach can be used. For example, this is the case for the problem of \emph{geodesic-transversal}~\cite{MBK22} (also called \emph{maximal shortest path cover} in~\cite{PS21}): a set $S$ of vertices of a graph $G$ such that every maximal \ah{isometric} path of $G$ contains a vertex of $S$.

Another distance-related problem which has been studied in relation with the cyclomatic number is the \emph{edge-tracking path problem}~\cite{TrackingPaths}, a variant of the more studied \emph{(vertex)-tracking path problem}~\cite{BKPS20,TrackingPaths,EGLM19}. It is shown in~\cite{TrackingPaths} that the edge-tracking path problem can be reduced to essentially finding a feedback edge set of the graph. This result also implies that the size of an optimal (vertex)-tracking path set is at most twice the cyclomatic number. Note that, however, the tracking path problem is known to be efficiently solvable for graphs of bounded treewidth~\cite{EGLM19}, hence it behaves differently from most problems studied here.

For other distance-based parameters, the type of bounds that are studied here do not hold. For example, this is the case for \emph{strong metric dimension}, that is more constrained than the metric dimension: a solution requires half of the vertices for any cycle graph~\cite{OP07} (which has cyclomatic number 1).

%We also note that similar bounds for path \emph{partition} numbers, as studied for example in~\cite{PPP,pmanuelisometric} ... \todo{F: not sure if similar bounds can be obtained for path partition numbers, because of trees}

%mention the tracking path problem.

\bibliographystyle{abbrv}
\bibliography{references}

\end{document}